\newcommand\sA{\mathcal A}
\renewcommand\tilde\widetilde
\renewcommand\hat\widehat
\newcommand\Ber{\mathop{\mathrm {Ber}}}
\newcommand\Berstar{\mathop{\mathrm {Ber}^*}}
\newcommand\B{\mathop{\mathrm {B\,\!}}}
\theoremstyle{example}
\theoremstyle{definition}
\newtheorem{rem}{Remark}
\theoremstyle{plain}
\newtheorem{lemma}{Lemma}
\newtheorem{thm}{Theorem}
\begin{document}
    \title{Darboux transformations for a twisted derivation and quasideterminant solutions to the super KdV equation}
    \author{C.X. Li$^{1}$\footnote{trisha\_li2001@yahoo.com} and J.J.C. Nimmo$^{2}$\footnote{j.nimmo@maths.gla.ac.uk}\\$^{1}$School of Mathematical Sciences,\\
    Capital Normal University,
    Beijing 100048, CHINA\\
     $^{2}$Department of Mathematics,\\
    University of Glasgow,
    Glasgow G12 8QW, UK
    }
    \date{}
\maketitle

\begin{abstract}
This paper is concerned with a generalized type of Darboux transformations defined in terms of a twisted derivation $D$ satisfying $D(AB)=D(A)+\sigma(A)B$ where $\sigma$ is a homomorphism. Such twisted derivations include regular derivations, difference and $q$-difference operators and superderivatives as special cases. Remarkably, the formulae for the iteration of Darboux transformations are identical with those in the standard case of a regular derivation and are expressed in terms of quasideterminants. As an example, we revisit the Darboux transformations for the Manin-Radul super KdV equation, studied in Q.P. Liu and M. Ma\~nas, Physics Letters B \textbf{396} 133--140, (1997). The new approach we take enables us to derive a unified expression for solution formulae in terms of quasideterminants, covering all cases at once, rather than using several subcases. Then, by using a known relationship between quasideterminants and superdeterminants, we obtain expressions for these solutions as ratios of superdeterminants. This coincides with the results of Liu and Ma\~nas in all the cases they considered but also deals with the one subcase in which they did not obtain such an expression. Finally, we obtain another type of quasideterminant solutions to the Main-Radul super KdV equation constructed from its binary Darboux transformations. These can also be expressed as ratios of superdeterminants and are a substantial generalization of the solutions constructed using binary Darboux transformations in earlier work on this topic.
\end{abstract}

\section{Introduction}
There have been many papers written recently on noncommutative versions of soliton equations, such as the KP equation,the KdV equation, the Hirota-Miwa equation and the two-dimensional Toda lattice equation \cite{K,P,S,WW1,WW2,WW3,H,HT1,DH,JN,LN1,CGJN,GN2,LN2}. It has been shown that noncommutative integrable systems often have solutions expressed in terms of quasideterminants \cite{GR}. In \cite{CGJN} for example, two families of solutions of the noncommutative KP equation were presented which were termed quasiwronskians and quasigrammians. The origin of these solutions was explained by Darboux and binary Darboux transformations. The quasideterminant solutions were then verified directly using formulae for derivatives of quasideterminants (see also \cite{DH}).

Supersymmetric integrable systems are a particular noncommutative extension of integrable systems, and have attracted much attention because of their applications in physics. Some well-known example are supersymmetric versions of the KdV, KP, sine-Gordon, nonlinear Schr\"odinger equation, AKNS and Harry Dym equations \cite{ManinRadul,MP,KBA,CMKP,HS,RMKM,MCPL,LQP}. Among these, the Manin-Radul super KdV equation is perhaps the best known and has been studied extensively and a number of interesting properties have been established. We mention here the existence of an infinite number of conservation laws, a bi-Hamiltonian structure \cite{OWPZ}, a bilinear form \cite{CARA,MNYM} and Darboux transformations \cite{HS,QPM,LPMM}.

Partly motivated by the properties of superderivatives, which we describe in Section 2, we consider a generalized derivation which has regular derivations, difference operators, $q$-difference operators and superderivatives as some of its special cases. We call this a twisted derivation, following the terminology used in \cite{HLS,CDC}. We show that one can formulate Darboux transformations for such twisted derivations and the iteration formulae are expressed in terms of quasideterminants in which one simply replaces the derivative with the twisted derivation. A special case of this result can be used to construct solutions to supersymmetric equations in terms of quasideterminants.

In \cite{QPM}, solutions for the Manin-Radul super KdV equation were constructed by iterating Darboux transformations by considering the cases of an even number and an odd number of iterations separately. All but one of the formulae obtained by the authors expressed the solutions in terms of superdeterminants.

In this paper, we use an alternative approach to the Darboux transformations using quasideterminants. This is successful in obtaining unified formulae for the solutions, not depending on the parity of the number of iterations. From these quasideterminant solutions, we are not only able to recover superdeterminant solutions given in \cite{QPM} but also get the superdeterminant representation in the one case they did not.

The paper is organized as follows. In Section 2, we give a brief review on superdeterminants, quasideterminants and the relationship between them. In Section 3, we define a twisted derivation and related Darboux transformation. We obtain a formula for iteration of this twisted Darboux transformation in terms of quasideterminants. Section 4 gives applications of both Darboux and binary Darboux transformations to the Manin-Radul super KdV system. In this section we obtain two general solution formulae in terms of quasideterminants, obtained using iterated Darboux and binary Darboux transformations respectively, and then show how these can be expressed in terms of superdeterminants. On the other hand, we present quasideterminant solutions constructed from its binary Darboux transformations to the MRSKdV system. In Section 5 we present conclusions.

\section{Superdeterminants and quasideterminants}
In this section, we collect together some basic facts about supersymmetric objects such as superderivatives, supermatrices, supertranspose and superdeterminants \cite{BZFA,DWB}, about quasideterminants \cite{GR,GGRW,EGR} and about the relationship between superdeterminants and quasideterminants \cite{BMRJ}. The reader is referred to the above mentioned literature for more details.

\subsection{Superderivatives, supertranspose and superdeterminants}\label{sec:super}
Let $\sA$ be a supercommutative, associative, unital superalgebra over a (commutative) ring $K$. There is a standard $\mathbb Z_2$-grading $\sA=\sA_0\oplus\sA_1$ such that $\sA_i\sA_j\subseteq\sA_{i+j}$. Elements of $\sA$ that belong to either
$\sA_0$ or $\sA_1$ are called \emph{homogeneous}; those in $\sA_0$ are called \emph{even} and those in $\sA_1$ are called \emph{odd}. The \emph{parity} $|a|$ of a homogeneous element $a$ is 0 if it is even and 1 if it is odd. It follows that if $a,b$ are homogeneous then $|ab|=|a|+|b|$. Supercommutativity means that all homogeneous elements $a,b$ satisfy $ba=(-1)^{|a||b|}ab$, i.e. even elements commute with all elements, and odd elements anticommute. In particular, this implies that $a_1^2=0$, for all $a_1\in\sA_1$.

\paragraph{Grade involution and superderivative}
The homomorphism $\hat\ \colon\sA\to\sA$ satisfying $\hat a_i=(-1)^ia_i$ for $a_i\in\sA_i$ is called the \emph{grade involution}. For general $a\in\sA$, expressed as $a=a_0+a_1$ where $a_i\in\sA_i$, we have $\hat a=a_0-a_1$. Also for any matrix $M=(m_{ij})$ over $\sA$, $\hat M:=(\hat m_{ij})$. It is easy to see that $\hat{\hat a}=a$.

A \emph{superderivative} $D$ is a linear mapping $D\colon\sA\to\sA$ such that $D(K)=0$ and $D(\sA_i)\subseteq\sA_{i+1}$
and satisfying $D(ab)=D(a)b+\hat a D(b)$. One way to obtain a superderivative is as $D=\partial_\theta+\theta\partial_x$ where $x$ is an even variable and $\theta$ is an odd (Grassmann) variable. For such a superderivative $D^2=\partial_x$.

Note that since $D(\sA_0)\subseteq\sA_1$ and $D(\sA_1)\subseteq\sA_0$, it follows that $D(\hat a)=D(a_0)-D(a_1)=-\hat{D(a)}$ and so grade involution and superderivatives anticommute.

\paragraph{Even and odd supermatrices}
A block matrix $\mathcal {M}=\begin{pmatrix}X&Y\\Z&T\end{pmatrix}$ over $\sA$ where $X$ is $r\times m$, $Y$ is $r\times n$, $Z$ is $s\times m$ and $T$ is $s\times n$ for integers $r$, $s$, $m$ and $n$ with $r,m\ge1$ and $s,n\ge0$ is called an $(r|s)\times(m|n)$ \emph{supermatrix}. It is said to be a \emph{even}, and has parity $0$, if $X$ and $T$ (if not empty) have even entries and $Y$ and $Z$ (if non-empty) have odd entries. One the other hand, if $X$ and $T$ have odd entries and $Y$, $Z$ have even entries then $\mathcal {M}$ is said to be a \emph{odd}, and has parity $1$. It is said to be homogeneous if it is either even or odd.

\paragraph{Supertranspose}
The \emph{supertranspose} of a homogeneous supermatrix $\mathcal{M}$, is defined to be
\begin{equation}
	\mathcal{M}^{st}=\begin{pmatrix}X^t & (-1)^{|\mathcal{M}|}Z^t\\-(-1)^{|\mathcal{M}|}Y^t&T^t\end{pmatrix},	
\end{equation}
where $^t$ denotes the normal matrix transpose. In particular, an even $(m|n)$-row vector has the form
$
	(a_{01},a_{02},\dots,a_{0m},a_{11},a_{12},\dots,a_{1n}),
$
where $a_{ij}\in\sA_i$, and its supertranspose is
\begin{equation}
(a_{01},a_{02},\dots,a_{0m},a_{11},a_{12},\dots,a_{1n})^{st}=(a_{01},a_{02},\dots,a_{0m},-a_{11},-a_{12},\dots,-a_{1n})^t.
\end{equation}
On the other hand, an odd $(m|n)$-row vector has the form
$
(a_{11},a_{12},\dots,a_{1m},a_{01},a_{02},\dots,a_{0n}),
$
and the supertranspose
\begin{equation}
(a_{11},a_{12},\dots,a_{1m},a_{01},a_{02},\dots,a_{0n})^{st}=(a_{11},a_{12},\dots,a_{1m},a_{01},a_{02},\dots,a_{0n})^t.
\end{equation}

For homogenous supermatrices $\mathcal{L}$, $\mathcal{M}$ and $\mathcal{N}$, it is known that
\begin{align}
\label{st}
(\mathcal{M}\mathcal{N})^{st}&=(-1)^{|\mathcal{M}||\mathcal{N}|}{\mathcal{N}}^{st}{\mathcal{M}}^{st},\\
\label{st st}
(\mathcal{M}^{st})^{st}&=(-1)^{|\mathcal{M}|}\hat{\mathcal{M}},
\end{align}
and
\begin{align}
(\mathcal{L}\mathcal{M}\mathcal{N})^{st}=(-1)^{|\mathcal{L}||\mathcal{M}|+|\mathcal{L}||\mathcal{N}|+|\mathcal{M}||\mathcal{N}|}
\mathcal{N}^{st}\mathcal{M}^{st}\mathcal{L}^{st}.
\end{align}
Also, supertranspose commutes with the grade involution but not with a superderivative; for a homogeneous matrix $\mathcal M$,
\begin{equation}
	\bigl(D(\mathcal M)\bigr)^{st}=(-1)^{|\mathcal M|}D(\hat{\mathcal M}^{st}).
\end{equation}

\paragraph{Superdeterminants} Consider an even $(m|n)\times(m|n)$ supermatrix $\mathcal M=\begin{pmatrix}X&Y\\Z&T\end{pmatrix}$ in which $X$ and $T$ are non-singular. The \emph{superdeterminant}, or \emph{Berezinian}, of $\mathcal{M}$ is defined to be
\[
\Ber(\mathcal{M})=\frac{\det(X-YT^{-1}Z)}{\det(T)}=\frac{\det(X)}{\det(T-ZX^{-1}Y)}.
\]
It is also convenient to define
\[
\Berstar(\mathcal{M})=\frac1{\Ber(\mathcal{M})}.
\]

\subsection{Quasideterminants}
An $n\times n$ matrix $M=(m_{i,j})$ over a ring $\mathcal R$ (noncommutative,
in general) has $n^2$ \emph{quasideterminants} written as
$|M|_{i,j}$ for $i,j=1,\dots, n$, which are also elements of
$\mathcal R$. They are defined recursively by

\begin{align}\label{defn}
    |M|_{i,j}&=m_{i,j}-r_i^j(M^{i,j})^{-1}c_j^i,\quad M^{-1}=(|M|_{j,i}^{-1})_{i,j=1,\dots,n}.
\end{align}
In the above $r_i^j$ represents the $i$th row of $M$ with the $j$th
element removed, $c_j^i$ the $j$th column with the $i$th element
removed and $M^{i,j}$ the submatrix obtained by removing the $i$th
row and the $j$th column from $M$. Quasideterminants can be also
denoted as shown below by boxing the entry about which the expansion
is made
\[
|M|_{i,j}=\begin{vmatrix}
    M^{i,j}&c_j^i\\
    r_i^j&\fbox{$m_{i,j}$}
    \end{vmatrix}.
\]
Note that if the entries in $M$ commute then
\begin{equation}\label{commute}
|M|_{i,j}=(-1)^{i+j}\frac{\det(M)}{\det(M^{i,j})}.
\end{equation}

\paragraph{Invariance under row and column operations}
The quasideterminants of a matrix have invariance properties
similar to those of determinants under elementary row and column
operations applied to the matrix. Consider the following
quasideterminant of an $n\times n $ matrix:
\begin{equation}\label{invariance}
    \begin{vmatrix}
    \begin{pmatrix}
      E&0\\
      F&g
    \end{pmatrix}
    \begin{pmatrix}
      A&B\\
      C&d
    \end{pmatrix}
    \end{vmatrix}_{n,n}=
    \begin{vmatrix}
      EA&EB\\
      FA+gC&FB+gd
    \end{vmatrix}_{n,n}=g(d-CA^{-1}B)=g
    \begin{vmatrix}
      A&B\\
      C&d
    \end{vmatrix}_{n,n}.
\end{equation}
The above formula can be used to understand the effect on a
quasideterminant of certain elementary row operations involving
multiplication on the left. This formula excludes those operations
which add left-multiples of the row containing the expansion point
to other rows since there is no simple way to describe the effect
of these operations. For the allowed operations however, the
results can be easily described; left-multiplying the row
containing the expansion point by $g$ has the effect of
left-multiplying the quasideterminant by $g$ and all other
operations leave the quasideterminant unchanged. There is
analogous invariance under column operations involving
multiplication on the right.

\paragraph{Noncommutative Jacobi Identity}

There is a quasideterminant version of Jacobi identity for determinants, called the noncommutative Sylvester's Theorem
by Gelfand and Retakh \cite{GR}. The simplest version of this identity is given by
\begin{equation}\label{nc syl}
    \begin{vmatrix}
      A&B&C\\
      D&f&g\\
      E&h&\fbox{$i$}
    \end{vmatrix}=
    \begin{vmatrix}
      A&C\\
      E&\fbox{$i$}
    \end{vmatrix}-
    \begin{vmatrix}
      A&B\\
      E&\fbox{$h$}
    \end{vmatrix}
    \begin{vmatrix}
      A&B\\
      D&\fbox{$f$}
    \end{vmatrix}^{-1}
    \begin{vmatrix}
      A&C\\
      D&\fbox{$g$}
    \end{vmatrix},
\end{equation}
where $f,g,h,i\in\mathcal R$, $A$ is an $n\times n$ matrix and $B,C$ (resp. $D,E$) are column (resp. row) $n$-vectors over $\mathcal R$.

\paragraph{Quasi-Pl\"{u}cker coordinates} Given an $(n+k)\times n$
matrix $A$, denote the $i$th row of $A$ by $A_i$, the submatrix of
$A$ having rows with indices in a subset $I$ of $\{1,2,\dots,n+k\}$
by $A_I$ and $A_{\{1,\dots,n+k\}\backslash\{i\}}$ by
$A_{\hat\imath}$. Given $i,j\in\{1,2,\dots,n+k\}$ and $I$ such that
$\#I=n-1$ and $j\notin I$, one defines the \emph{(right)
quasi-Pl\"{u}cker coordinates}
\begin{equation}\label{rplucker}
    r^I_{ij}=r^I_{ij}(A):=
    \begin{vmatrix}
    A_I\\
    A_i
    \end{vmatrix}_{ns}
    \begin{vmatrix}
    A_I\\
    A_j
    \end{vmatrix}_{ns}^{-1}=-
    \begin{vmatrix}
    A_I&0\\
    A_i&\fbox{0}\\
    A_j&1
    \end{vmatrix},
\end{equation}
for any column index $s\in\{1,\dots,n\}$. The final equality in
\eqref{rplucker} comes from an identity of the form \eqref{nc syl}
and proves that the definition is independent of the choice of
$s$.

\paragraph{Solving linear systems} Solutions of systems of linear systems over an arbitrary ring can be expressed in terms of quasideterminants.
\begin{thm}\label{1}
Let $A=(a_{ij})$ be an $n\times n$ matrix over a ring $\mathcal R$.
Assume that all the quasideterminants $|A|_{ij}$ are defined and
invertible. Then the system of equations
\begin{align}
x_1a_{1i}+x_2a_{2i}+\cdots+x_na_{ni}=b_i,\,\ 1\le i\le n
\end{align}
has the unique solution
\begin{align}
x_i=\sum_{j=1}^nb_j|A|_{ij}^{-1}, \,\ i=1,\ldots,n.
\end{align}
\end{thm}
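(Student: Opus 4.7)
The plan is to recognize the system as a row-vector matrix equation $xA = b$ and then to invert $A$ using the quasideterminant formula from \eqref{defn}. Writing $x=(x_1,\dots,x_n)$ and $b=(b_1,\dots,b_n)$ as row vectors over $\mathcal R$, the hypothesis on the system $x_1 a_{1i}+\dots+x_n a_{ni}=b_i$ for $1\le i\le n$ is exactly $xA=b$. Since all quasideterminants $|A|_{i,j}$ are assumed to be defined and invertible, the recursive definition
\[
A^{-1}=\bigl(|A|_{j,i}^{-1}\bigr)_{i,j=1,\dots,n}
\]
makes sense and provides an inverse for $A$, so $x=bA^{-1}$.

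Next I would read off components, keeping careful track of left/right order in the noncommutative setting. By the displayed formula for $A^{-1}$, its $(j,i)$ entry equals $|A|_{i,j}^{-1}$. Multiplying $b$ on the right by $A^{-1}$ then gives, for each $i$,
\[
x_i=\sum_{j=1}^n b_j\,(A^{-1})_{j,i}=\sum_{j=1}^n b_j\,|A|_{i,j}^{-1},
\]
which is the claimed solution formula. Uniqueness is immediate once $A$ has a two-sided inverse, as then $xA=b$ forces $x=bA^{-1}$.

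The only thing that actually requires care is the justification that $A^{-1}$ given by \eqref{defn} is a genuine (two-sided) inverse of $A$ under the stated hypothesis; this is the main point where one has to appeal to a standard fact about quasideterminants, since in the noncommutative setting one cannot simply take determinants. In practice this is part of the basic theory developed in \cite{GR,GGRW}, so I would simply cite it. With that in hand, the proof is a one-line rearrangement, and the subtlety the reader should keep in mind is the consistent placement of the coefficients $b_j$ to the left of $|A|_{i,j}^{-1}$, dictated by having $x$ and $b$ act as row vectors on the left of $A$.
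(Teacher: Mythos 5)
Your proof is correct. The paper states this theorem purely as background material from the Gelfand--Retakh theory of quasideterminants and supplies no proof of its own, so there is nothing to diverge from: your argument --- recasting the system as the row-vector equation $xA=b$, invoking the inversion formula $A^{-1}=\bigl(|A|_{j,i}^{-1}\bigr)_{i,j}$ from the recursive definition, and reading off $x_i=\sum_j b_j|A|_{i,j}^{-1}$ with the $b_j$ correctly placed on the left --- is exactly the standard derivation, and you rightly isolate the only nontrivial input (that the quasideterminant formula yields a genuine two-sided inverse) as a fact to be cited from \cite{GR,GGRW}.
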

Let $A_l(b)$ be the $n\times n$ matrix obtained by replacing the
$l$-th row of the matrix $A$ with the row $(b_1,\ldots,b_n)$. Then
we have the following noncommutative version of \emph{Cramer's rule}.
\begin{thm}\label{2}
In notation of Theorem~\ref{1}, if the quasideterminants $|A|_{ij}$ and
$|A_i(b)|_{ij}$ are well defined, then
\[x_i|A|_{ij}=|A_i(b)|_{ij}.\]
\end{thm}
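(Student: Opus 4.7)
The plan is to prove this by using the invariance of quasideterminants under row operations on the expansion-point row, together with the linear system that defines $x$. The key observation is that the $i$-th row of $A_i(b)$ can, via the equations $\sum_k x_k a_{ki}=b_i$, be rewritten as a linear combination of the rows of $A$, with $x_i$ as the coefficient of the $i$-th row (which stays as the expansion-point row throughout).

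First I would write $b=xA$, regarding $b=(b_1,\dots,b_n)$ and $x=(x_1,\dots,x_n)$ as row vectors; the system in the statement of Theorem~\ref{1} reads exactly $b_i=\sum_k x_k a_{ki}$, so
\[
(\text{row $i$ of }A_i(b))\;=\;\sum_k x_k\,A_k\;=\;x_iA_i+\sum_{k\neq i}x_k A_k,
\]
where $A_k$ denotes the $k$-th row of $A$. The matrix $A_i(b)$ agrees with $A$ in every row other than row $i$, so I can perform the row operation of subtracting $\sum_{k\neq i}x_k A_k$ from row $i$. This operation is of the form permitted by the invariance formula \eqref{invariance}: it modifies only the expansion-point row by adding left-multiples of the other rows, and such operations leave the quasideterminant unchanged.

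After this operation row $i$ becomes $x_iA_i$, that is, row $i$ of $A$ left-multiplied by $x_i$, while all other rows remain rows of $A$. A second appeal to \eqref{invariance} (the statement that left-multiplying the expansion-point row by $g$ left-multiplies the quasideterminant by $g$) then gives
\[
|A_i(b)|_{ij}\;=\;x_i\,|A|_{ij},
\]
as required.

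The step I expect to need most care is the justification that both row manipulations really are admissible: the invariance principle recalled after \eqref{invariance} excludes operations that add left-multiples of the expansion-point row to other rows, but here we are doing the opposite, modifying only the expansion-point row. I would therefore emphasise, by reading off the block form in \eqref{invariance} with $g=1$ (no scaling) and a suitable choice of $F$ (for the subtraction) and then with $F=0$ and $g=x_i$ (for the scaling), that both steps fit exactly the template covered by that identity, so no additional hypothesis beyond the well-definedness of the quasideterminants $|A|_{ij}$ and $|A_i(b)|_{ij}$ assumed in the statement is required.
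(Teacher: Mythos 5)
Your argument is correct. The paper itself states Theorem~\ref{2} without proof, quoting it as the known noncommutative Cramer's rule, so there is no in-paper argument to compare against; your derivation is a valid self-contained proof using only material the paper does provide. The two row manipulations you describe are both of the admissible kind (they touch only the row containing the expansion point), and they can in fact be combined into a single application of \eqref{invariance} with $E=I$, $F=(x_k)_{k\neq i}$ and $g=x_i$, which yields $|A_i(b)|_{ij}=x_i|A|_{ij}$ in one step; note that \eqref{invariance} only requires $E$ (here the identity) to be invertible, so no invertibility of $x_i$ is needed. The one point worth making explicit is that $A$, $A_i(b)$ and the intermediate matrix all share the same submatrix $A^{i,j}$, so the well-definedness hypothesis on $|A|_{ij}$ and $|A_i(b)|_{ij}$ already guarantees that every quasideterminant appearing in your chain of equalities exists.
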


\subsection{Relationship between quasideterminants and superdeterminants}
The basic formulae connecting quasideterminants of even supermatrices with their Berezinians are given in \cite{BMRJ}.
\begin{thm}
Let $\mathcal{M}$ be an $(m|n)\times(m|n)$-supermatrix. Then
\begin{equation}\label{Ber}
    |\mathcal{M}|_{i,j}=
    \begin{cases}
        (-1)^{i+j}\dfrac{\Ber(\mathcal{M})}{\Ber(\mathcal{M}^{i,j})}& 1\le i,j\le m,\\[12pt]
        (-1)^{i+j}\dfrac{\Berstar(\mathcal{M})}{\Berstar(\mathcal{M}^{i,j})}& m+1\le i,j\le m+n,
    \end{cases}
\end{equation}
\emph{(}cf. \eqref{commute}.\emph{)}
\end{thm}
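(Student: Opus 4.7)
The plan is to reduce $\mathcal M$ to block diagonal form by a Schur complement decomposition and then apply the commutative quasideterminant formula \eqref{commute} to the resulting diagonal block, whose entries lie in the commutative subalgebra $\sA_0$. The preservation of the quasideterminant under the reduction will follow from the invariance statement accompanying \eqref{invariance}.

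Write $\mathcal M=\begin{pmatrix} X & Y \\ Z & T \end{pmatrix}$. For the first case, $1\le i,j\le m$, I would left-multiply $\mathcal M$ by $\begin{pmatrix} I & -YT^{-1} \\ 0 & I \end{pmatrix}$ and right-multiply by $\begin{pmatrix} I & 0 \\ -T^{-1}Z & I \end{pmatrix}$, obtaining
\[
\mathcal M' = \begin{pmatrix} X - YT^{-1}Z & 0 \\ 0 & T \end{pmatrix}.
\]
These amount to adding multiples of rows $m+1,\dots,m+n$ to rows $1,\dots,m$, and analogously for columns. Since the expansion point $(i,j)$ lies in the top-left block, none of these is the forbidden operation described after \eqref{invariance} (adding a multiple of the expansion-point row or column elsewhere), so $|\mathcal M|_{i,j}=|\mathcal M'|_{i,j}=|X-YT^{-1}Z|_{i,j}$, using the block-diagonal structure of $\mathcal M'$ for the second equality. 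Now every entry of $X-YT^{-1}Z$ lies in $\sA_0$ (as $YT^{-1}Z$ is odd$\cdot$even$\cdot$odd, hence even), and $\sA_0$ is commutative by supercommutativity, so \eqref{commute} gives
\[
|X-YT^{-1}Z|_{i,j} = (-1)^{i+j}\frac{\det(X-YT^{-1}Z)}{\det\bigl((X-YT^{-1}Z)^{i,j}\bigr)}.
\]
A routine entrywise check shows that $(X-YT^{-1}Z)^{i,j}$ equals the Schur complement of $T$ in $\mathcal M^{i,j}$, so dividing numerator and denominator by $\det(T)$ converts the right-hand side into $(-1)^{i+j}\Ber(\mathcal M)/\Ber(\mathcal M^{i,j})$, as required.

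Case 2 ($m+1\le i,j\le m+n$) is entirely analogous, using the reversed Schur decomposition
\[
\mathcal M'' = \begin{pmatrix} X & 0 \\ 0 & T-ZX^{-1}Y \end{pmatrix},
\]
obtained by operations that add multiples of the upper/left rows and columns (indices $\le m$) to the lower/right ones (indices $>m$). The expansion point now sits in the bottom-right block, so these operations are again allowed, and $T-ZX^{-1}Y$ again has entries in $\sA_0$. Applying \eqref{commute} and dividing numerator and denominator by $\det(X)$ (together with the identity $(-1)^{(i-m)+(j-m)}=(-1)^{i+j}$) produces the $\Berstar$ expression.

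The main obstacle is bookkeeping: one must verify carefully (i) that in each case the chosen Schur-complement row and column operations fall on the allowed side of the dichotomy in \eqref{invariance}, so that $|\mathcal M|_{i,j}$ is preserved; and (ii) that deleting row $i$ and column $j$ of the Schur complement of $\mathcal M$ produces the Schur complement of $\mathcal M^{i,j}$ rather than something more complicated. Both facts are direct to check but are exactly what ties the left-hand side of \eqref{Ber} to the right.
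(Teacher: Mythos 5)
Your argument is correct, but note that the paper itself offers no proof of this theorem: it is quoted verbatim from Bergvelt and Rabin \cite{BMRJ} as a known fact, so there is nothing in the source to compare your proof against line by line. What you supply is a clean, self-contained derivation, and it checks out. The two Schur reductions $\mathcal M\to\mathcal M'$ and $\mathcal M\to\mathcal M''$ are implemented by triangular multipliers whose $i$-th column (resp.\ $j$-th row) is a standard basis vector, so in each case no row or column acquires a multiple of the one containing the expansion point and the diagonal coefficient there is $1$; hence \eqref{invariance} applies with $g=1$ and the quasideterminant is preserved, after which the block-diagonal form collapses it to the quasideterminant of the relevant Schur complement. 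The parity count (odd $\cdot$ even $\cdot$ odd $=$ even) puts that Schur complement over the commutative ring $\sA_0$, so \eqref{commute} applies, and the identity $(X-YT^{-1}Z)^{i,j}=X^{i,j}-Y^{i,\cdot}T^{-1}Z^{\cdot,j}$ (deleting a row of $Y$ and a column of $Z$) converts the ratio of determinants into the ratio of Berezinians after dividing by $\det(T)$ (resp.\ $\det(X)$, with the sign $(-1)^{(i-m)+(j-m)}=(-1)^{i+j}$ in the second case). The only hypotheses you use beyond the statement are the generic invertibility of $X$, $T$ and the relevant submatrices, which the paper's definition of $\Ber$ already assumes; this is essentially the standard proof of the Bergvelt--Rabin formula, and it is a useful addition since the paper leaves it implicit.
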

Roughly speaking, a quasideterminant with indices in one of the even blocks of $\mathcal M$ is given as a ratio of Berezinians. A quasideterminant with its indices in the one of the odd blocks is not well-defined.

\section{Darboux transformations for twisted derivations}
Consider now a more general setting in which $\sA$ is an associative, unital algebra over ring $K$, not necessarily graded. Suppose that there is a homomorphism $\sigma\colon\sA\to\sA$ (i.e.\ for all $\alpha\in K$, $a,b\in\sA$, $\sigma(\alpha a)=\alpha \sigma(a)$, $\sigma(a+b)=\sigma(a)+\sigma(b)$ and $\sigma(ab)=\sigma(a)\sigma(b)$) and a \emph{twisted derivation} or $\sigma$-derivation \cite{HLS,CDC} $D\colon\sA\to\sA$ satisfying $D(K)=0$ and $D(ab)=D(a)b+\sigma(a)D(b)$.

Simple examples arise in the case that elements $a\in\sA$ depend on a variable $x$, say.
\begin{description}
	\item[Derivative] Here $D=\partial/\partial x$ satisfies $D(ab)=D(a)b+aD(b)$ and $\sigma$ is the identity mapping.
	\item[Forward difference] The homomorphism is the shift operator $T$, where $T(a(x))=a(x+1)$ and the twisted derivation is
	\[
	\Delta(a(x))=\frac{a(x+h)-a(x)}h,
	\]
satisfying $\Delta(ab)=D(a)b+T(a)D(b)$.
	\item[Jackson derivative] The homomorphism is a $q$-shift operator defined by $S_q(a(x))=a(qx)$ and the twisted derivation is
	\[
	D_q(a(x))=\frac{a(qx)-a(x)}{(q-1)x}.
	\]
satisfying $D_q(ab)=D_q(a)b+S_q(a)D_q(b)$.
	\item[Superderivative] As described in Section~\ref{sec:super}, for $a,b\in\sA$, a superalgebra, $D(ab)=D(a)b+\hat{a}D(b)$ where  $\hat\ $ is the grade involution.
\end{description}

There are a number of simple properties of such a twisted derivation, which are summarized in the following lemma.
\begin{lemma}\label{lem:basic properties}
\begin{enumerate}
\item Let $A, B$ be matrices over $\sA$. Whenever $AB$ is defined, $\sigma(AB)=\sigma(A)\sigma(B)$ and $D(AB)=D(A)B+\sigma(A)D(B)$,
\item Let $A$ be an invertible matrix over $\sA$. Then $\sigma(A)^{-1}=\sigma(A^{-1})$ and $D(A^{-1})=-\sigma(A)^{-1}D(A)A^{-1}$,
\item Let $A,B,C$ be matrices over $\sA$ such that $AB^{-1}C$ is well-defined. Then
\[
    D(AB^{-1}C)=D(A)B^{-1}C+\sigma(A)\sigma(B)^{-1}(D(C)-D(B)B^{-1}C).
\]
\end{enumerate}
\end{lemma}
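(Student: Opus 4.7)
The plan is to prove the three parts in sequence, each building on the previous one, since the algebraic content is essentially the scalar-level axioms for $\sigma$ and $D$ extended to matrices.

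For part (1), I would argue entrywise. Write $(AB)_{ij}=\sum_k A_{ik}B_{kj}$. Since $\sigma$ is a homomorphism on $\sA$ and is extended entrywise to matrices, linearity and multiplicativity give $\sigma((AB)_{ij})=\sum_k\sigma(A_{ik})\sigma(B_{kj})=(\sigma(A)\sigma(B))_{ij}$. For the $D$ statement, apply the scalar twisted Leibniz rule inside the sum: $D\!\left(\sum_k A_{ik}B_{kj}\right)=\sum_k\bigl(D(A_{ik})B_{kj}+\sigma(A_{ik})D(B_{kj})\bigr)$, which is exactly the $(i,j)$ entry of $D(A)B+\sigma(A)D(B)$. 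Nothing subtle is happening; the only point to mention is that the $K$-linearity of $\sigma$ and $D$ together with $D(K)=0$ ensures constant factors pass through correctly and the identity matrix is annihilated by $D$.

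For part (2), start from $AA^{-1}=I$. Applying $\sigma$ and using part (1) gives $\sigma(A)\sigma(A^{-1})=\sigma(I)=I$, so $\sigma(A^{-1})=\sigma(A)^{-1}$ (in particular $\sigma(A)$ is invertible). Applying $D$ to the same identity and using part (1) together with $D(I)=0$ (because the entries of $I$ lie in $K$) yields
\[
D(A)A^{-1}+\sigma(A)D(A^{-1})=0,
\]
from which $D(A^{-1})=-\sigma(A)^{-1}D(A)A^{-1}$ follows by left-multiplying by $\sigma(A)^{-1}=\sigma(A^{-1})$.

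For part (3), apply part (1) twice to the product $A\cdot B^{-1}\cdot C$:
\[
D(AB^{-1}C)=D(A)B^{-1}C+\sigma(A)D(B^{-1})C+\sigma(A)\sigma(B^{-1})D(C).
\]
Now substitute $\sigma(B^{-1})=\sigma(B)^{-1}$ and $D(B^{-1})=-\sigma(B)^{-1}D(B)B^{-1}$ from part (2), and factor $\sigma(A)\sigma(B)^{-1}$ out of the last two terms to obtain the stated formula. I do not expect any real obstacle here: everything is a direct manipulation of the defining axioms, and the only point worth being careful about is order of multiplication (non-commutativity forbids any rearrangement), so I would write out each substitution explicitly rather than combine steps.
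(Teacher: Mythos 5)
Your proof is correct. The paper in fact states Lemma~\ref{lem:basic properties} without any proof, treating it as a routine consequence of the defining axioms, and your entrywise verification of (1), the application of $\sigma$ and $D$ to $AA^{-1}=I$ for (2), and the combination of (1) and (2) for (3) is exactly the standard argument the authors implicitly intend. The only point worth flagging is that your step $\sigma(I)=I$ uses that the homomorphism $\sigma$ is unital, i.e.\ $\sigma(1)=1$; the paper does not state this explicitly, but it holds in all of its examples and is needed anyway for $\sigma(A)^{-1}$ to make sense, so this is not a genuine gap.
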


\subsection{Darboux transformations}\label{sec:DT}
Let $\theta_0,\theta_1,\theta_2,\ldots$ be a sequence in $\sA$. Consider the sequence $\theta[0],\theta[1],\theta[2],\ldots$ in $\sA$, generated from the first sequence by Darboux transformations of the form
\begin{equation}
	G_\theta=\sigma(\theta) D\theta^{-1}=D-D(\theta)\theta^{-1},
\end{equation}
where $D$ and $\sigma$ are the twisted derivation and homomorphism defined above. To be specific, $\theta[0]=\theta_0$ and $G[0]=G_{\theta[0]}$, then let
\begin{equation}\label{theta[1]}
	\theta[1]=G[0](\theta_1)=D(\theta_1)-D(\theta_0)\theta_0^{-1}\theta_1
\end{equation}
and $G[1]=G_{\theta[1]}$, $\theta[2]=G[1]\circ G[0](\theta_2)$ and $G[2]=G_{\theta[2]}$ and so on. In general, for $k\in\mathbb{N}$,
\begin{equation}\label{G[k]}
\theta[k]=G[k-1]\circ G[k-2]\circ\dots\circ G[0](\theta_k),\quad G[k]=\sigma(\theta[k])D\theta[k]^{-1},	
\end{equation}
and we require that each $\theta[k]$ is invertible.

In the standard case, $D=\partial$ and $\sigma=\mathrm{Id}$, it is well known that the terms in the sequence of Darboux transformations have closed form expressions in terms of the original sequence. In the case that $\sA$ is commutative, they are expressed as ratios of wronskian determinants \cite{Crum},
\begin{equation}
\theta[n]=\frac{
\begin{vmatrix}
\theta_0&\dots&\theta_{n-1}&\theta_n\\
\theta^{(1)}_0&\dots&\theta^{(1)}_{n-1}&\theta^{(1)}_n\\
\vdots&&\vdots\\
\theta^{(n-1)}_0&\dots&\theta^{(n-1)}_{n-1}&\theta^{(n-1)}_{n}\\
\theta^{(n)}_0&\dots&\theta^{(n)}_{n-1}&\theta^{(n)}_n
\end{vmatrix}
}{
\begin{vmatrix}
\theta_0&\dots&\theta_{n-1}\\
\theta^{(1)}_0&\dots&\theta^{(1)}_{n-1}\\
\vdots&&\vdots\\
\theta^{(n-1)}_0&\dots&\theta^{(n-1)}_{n-1}
\end{vmatrix}
},\quad n\in\mathbb{N},	
\end{equation}
where $\theta^{(i)}_j$ denotes $\partial^i(\theta_j)$. In the case that $\sA$ is not commutative, the terms in the sequence are expressed as quasideterminants \cite{EGR},
\begin{equation}
\theta[n]=
\begin{vmatrix}
\theta_0&\dots&\theta_{n-1}&\theta_n\\
\theta^{(1)}_0&\dots&\theta^{(1)}_{n-1}&\theta^{(1)}_n\\
\vdots&&\vdots\\
\theta^{(n-1)}_0&\dots&\theta^{(n-1)}_{n-1}&\theta^{(n-1)}_{n}\\
\theta^{(n)}_0&\dots&\theta^{(n)}_{n-1}&\fbox{$\theta^{(n)}_n$}
\end{vmatrix},\quad n\in\mathbb{N}.
\end{equation}
The following theorem gives a generalisation of this formula to the case of general $D$ and $\sigma$. Note in particular that the expressions do not depend on $\sigma$ and are obtained simply by replacing $\partial$ with $D$.
\begin{thm}\label{thm:DT}
Let $\phi[0]=\phi$ and for $n\in\mathbb{N}$ let
\begin{equation*}
\phi[n]=D(\phi[n-1])-D(\theta[n-1])\theta[n-1]^{-1}\phi[n-1],	
\end{equation*}
where $\theta[n]=\phi[n]|_{\phi\to\theta_n}$.
Then, for $n\in\mathbb{N}$,
\begin{equation}\label{theta[n]}
	\phi[n]=
	\begin{vmatrix}
	    \theta_0&\cdots&\theta_{n-1}&\phi\\
	    D(\theta_0)&\cdots&D(\theta_{n-1})&D(\phi)\\
	    \vdots&&\vdots&\vdots\\
	    D^{n-1}(\theta_0)&\cdots&D^{n-1}(\theta_{n-1})&D^{n-1}(\phi)\\
	    D^{n}(\theta_0)&\cdots&D^{n}(\theta_{n-1})&\fbox{$D^{n}(\phi)$}
	\end{vmatrix}.
\end{equation}
\end{thm}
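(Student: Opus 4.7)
The plan is induction on $n$. The base case $n=1$ reduces to checking that the definition $\phi[1] = D(\phi) - D(\theta_0)\theta_0^{-1}\phi$ coincides with the $2\times 2$ quasideterminant on the right of \eqref{theta[n]} expanded about its bottom-right entry; this is immediate.

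For the inductive step, I would write the inductive hypothesis compactly as $\phi[n] = D^n(\phi) - R_n M_n^{-1} C_n^\phi$, where $M_n = \bigl(D^{i-1}(\theta_{j-1})\bigr)_{i,j=1}^n$, the row $R_n = \bigl(D^n(\theta_0),\dots,D^n(\theta_{n-1})\bigr)$, and the column $C_n^\phi = \bigl(\phi, D(\phi), \dots, D^{n-1}(\phi)\bigr)^t$, and similarly $\theta[n] = D^n(\theta_n) - R_n M_n^{-1} B_n$ with $B_n = \bigl(\theta_n, \dots, D^{n-1}(\theta_n)\bigr)^t$. Two parallel calculations then need to be compared. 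First, applying the noncommutative Sylvester identity \eqref{nc syl} to the $(n+2)\times(n+2)$ quasideterminant $\Phi_{n+1}$ (the right-hand side of \eqref{theta[n]} at level $n+1$), with pivot $2\times 2$ block formed by rows $D^n, D^{n+1}$ and columns $\theta_n, \phi$, the inductive hypothesis identifies two of the resulting $(n+1)\times(n+1)$ quasideterminants as $\phi[n]$ and $\theta[n]$, giving $\Phi_{n+1} = X - Y\theta[n]^{-1}\phi[n]$, where $X$ and $Y$ are the quasideterminants obtained from $\phi[n]$ and $\theta[n]$ respectively by replacing the $D^n$ row with the $D^{n+1}$ row. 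Second, applying Lemma~\ref{lem:basic properties}(3) to differentiate $\phi[n]$ and $\theta[n]$ directly yields
\[
D(\phi[n]) = X - U V^\phi, \qquad D(\theta[n]) = Y - U V^{\theta_n},
\]
where $U = \sigma(R_n)\sigma(M_n)^{-1}$, $V^\phi = D(C_n^\phi) - D(M_n) M_n^{-1} C_n^\phi$, and $V^{\theta_n}$ is the analogue with $B_n$ in place of $C_n^\phi$. Substituting into the recursive definition of $\phi[n+1]$,
\[
\phi[n+1] - \Phi_{n+1} = -U\bigl(V^\phi - V^{\theta_n}\theta[n]^{-1}\phi[n]\bigr),
\]
and the goal reduces to showing that the right-hand side vanishes.

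The decisive observation will be that $V^\phi$ and $V^{\theta_n}$ each have only one nonzero component. Setting $y = M_n^{-1}C_n^\phi$, the identity $M_n y = C_n^\phi$ reads $\sum_j D^k(\theta_{j-1})y_j = D^k(\phi)$ for $k = 0, 1, \dots, n-1$, which forces the first $n-1$ components of $V^\phi$ to vanish outright; the remaining last component is $D^n(\phi) - R_n y = \phi[n]$ itself. Thus $V^\phi = e_n\phi[n]$, and by the same argument $V^{\theta_n} = e_n\theta[n]$, where $e_n$ denotes the $n$th standard basis column. Hence $V^\phi - V^{\theta_n}\theta[n]^{-1}\phi[n] = 0$, closing the induction.

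The main obstacle I expect is precisely this final cancellation. Individually, $D(\phi[n])$ carries $\sigma$-dependent corrections that prevent it from being a quasideterminant of the required shape, and it is not a priori obvious that those corrections drop out when one subtracts $D(\theta[n])\theta[n]^{-1}\phi[n]$. The point is that both corrections happen to be supported only on the last coordinate of an $n$-vector, where they are proportional to $\phi[n]$ and $\theta[n]$ respectively, so right-multiplication by $\theta[n]^{-1}\phi[n]$ aligns them exactly. This is also the structural reason why $\sigma$ drops out of the final formula \eqref{theta[n]}, as the theorem emphasises.
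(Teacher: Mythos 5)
Your proposal is correct and follows essentially the same route as the paper: induction, the twisted quotient rule of Lemma~\ref{lem:basic properties}(3) to differentiate $\phi[n]$ and $\theta[n]$, cancellation of the $\sigma$-dependent corrections (your $-UV^\phi$ with $V^\phi=e_n\phi[n]$ is exactly the paper's term $\sigma(\cdot)\phi[k]$ with the boxed-zero quasideterminant), and the noncommutative Sylvester identity \eqref{nc syl} to reassemble $X-Y\theta[n]^{-1}\phi[n]$ into the $(n+2)\times(n+2)$ quasideterminant. You in fact supply more detail than the paper does at the step it dismisses with ``Using Lemma~\ref{lem:basic properties}, one obtains''.
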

\begin{proof}
(By induction.) The case $n=1$ follows from the definition \eqref{defn} and \eqref{theta[1]}. Now assume that \eqref{theta[n]} holds for $n=k$. Then
\[
\phi[k]=D^k(\phi)-
\begin{bmatrix}
    D^k(\theta_0)&\cdots& D^k(\theta_{k-1})
\end{bmatrix}
\Theta^{-1}
\begin{bmatrix}
    \phi\\D(\phi)\\\vdots\\D^{k-1}(\phi)
\end{bmatrix},
\]
where
\[
    \Theta=\begin{bmatrix}
        \theta_0&\cdots&\theta_{k-1}\\
        D(\theta_0)&\cdots&D(\theta_{k-1})\\
        \vdots&&\vdots\\
        D^{k-1}(\theta_0)&\cdots&D^{k-1}(\theta_{k-1})
    \end{bmatrix}.
\]
To complete the proof we must show that $\phi[k+1]=D(\phi[k])-D(\theta[k])\theta[k]^{-1}\phi[k]$ can be written in the form \eqref{theta[n]} for $n=k+1$.

Using Lemma~\ref{lem:basic properties}, one obtains
\[
    D(\phi[k])=
    \begin{vmatrix}
        \theta_0&\cdots&\theta_{k-1}&\phi\\
        D(\theta_0)&\cdots&D(\theta_{k-1})&D(\phi)\\
        \vdots&&\vdots&\vdots\\
        D^{k-2}(\theta_0)&\cdots&D^{k-2}(\theta_{k-1})&D^{k-2}(\phi)\\
        D^{k-1}(\theta_0)&\cdots&D^{k-1}(\theta_{k-1})&D^{k-1}(\phi)\\
        D^{k+1}(\theta_0)&\cdots&D^{k+1}(\theta_{k-1})&\fbox{$D^{k+1}(\phi)$}
    \end{vmatrix}+
    \sigma\left(\begin{vmatrix}
        \theta_0&\cdots&\theta_{k-1}&0\\
        D(\theta_0)&\cdots&D(\theta_{k-1})&0\\
        \vdots&&\vdots&\vdots\\
        D^{k-2}(\theta_0)&\cdots&D^{k-2}(\theta_{k-1})&0\\
        D^{k-1}(\theta_0)&\cdots&D^{k-1}(\theta_{k-1})&1\\
        D^{k}(\theta_0)&\cdots&D^{k}(\theta_{k-1})&\fbox{$0$}
    \end{vmatrix}\right)
    \phi[k],
\]
and so,
\[
    D(\theta[k])=
    \begin{vmatrix}
        \theta_0&\cdots&\theta_{k-1}&\theta_{k}\\
        D(\theta_0)&\cdots&D(\theta_{k-1})&D(\theta_{k})\\
        \vdots&&\vdots&\vdots\\
        D^{k-2}(\theta_0)&\cdots&D^{k-2}(\theta_{k-1})&D^{k-2}(\theta_{k})\\
        D^{k-1}(\theta_0)&\cdots&D^{k-1}(\theta_{k-1})&D^{k-1}(\theta_{k})\\
        D^{k+1}(\theta_0)&\cdots&D^{k+1}(\theta_{k-1})&\fbox{$D^{k+1}(\theta_{k})$}
    \end{vmatrix}+
    \sigma\left(\begin{vmatrix}
        \theta_0&\cdots&\theta_{k-1}&0\\
        D(\theta_0)&\cdots&D(\theta_{k-1})&0\\
        \vdots&&\vdots&\vdots\\
        D^{k-2}(\theta_0)&\cdots&D^{k-2}(\theta_{k-1})&0\\
        D^{k-1}(\theta_0)&\cdots&D^{k-1}(\theta_{k-1})&1\\
        D^{k}(\theta_0)&\cdots&D^{k}(\theta_{k-1})&\fbox{$0$}
    \end{vmatrix}\right)
    \theta[k].
\]
Then, using \eqref{nc syl}, $D(\phi[k])-D(\theta[k])\theta[k]^{-1}\phi[k]$ can be expressed in the required form.
\end{proof}
As an application of this theorem, in the following section we will use it to construct solutions of the super KdV equation.

\section{The Manin-Radul Super KdV equation}
The Manin-Radul supersymmetric KdV (MRSKdV) system \cite{ManinRadul} is
\begin{align}
\label{MRSKdV}
\alpha_t=\tfrac14(\alpha_{xx}+3\alpha D(\alpha)+6\alpha u)_x,\quad u_t=\tfrac14(u_{xx}+3u^2+3\alpha D(u))_x,
\end{align}
where $u$ and $\alpha$ are even and odd dependent variables respectively, $x,t$ are even independent variables variables and $D$ is the superderivative defined by $D=\partial_\theta+\theta\partial_x$, where $\theta$ is a Grassmann odd variable, satisfying $D^2=\partial_x$. This system has the Lax pair
\begin{align}
\label{L}
	L&=\partial_x^2+\alpha D+u,\\	
\label{M}
	M&=\partial_x^3+\tfrac34\bigl((\alpha\partial_x+\partial_x\alpha)D+u\partial_x+\partial_xu),	
\end{align}
in the sense that $L_t+[L,M]=0$ implies \eqref{MRSKdV}. Eigenfunctions satisfy
\begin{equation}
\label{LP}
L(\phi)=\lambda\phi,\quad\phi_t=M(\phi),
\end{equation}
for eigenvalue $\lambda$.

\subsection{Darboux transformations}
A Darboux transformation for this system \cite{QPM} is
\begin{align}
\phi&\to D(\phi)-D(\theta)\theta^{-1}\phi,\\
\alpha&\to-\alpha+2(D(\theta)\theta^{-1})_x,\\
u&\to u+D(\alpha)-2D(\theta)\theta^{-1}(\alpha-(D(\theta)\theta^{-1})_x),
\end{align}
where $\theta$ is an invertible, and hence necessarily even, solution of \eqref{LP}. Note that it is an example of the general type of Darboux transformation discussed in Section~\ref{sec:DT}. As discussed there, this transformation may be iterated by taking solutions $\theta_0,\theta_1,\theta_2,\dots$ of \eqref{LP} to obtain
\begin{align}
\phi[k+1]&=D(\phi[k])-D(\theta[k])\theta[k]^{-1}\phi[k],\\
\theta[k]&=\phi[k]|_{\phi\to\theta_k}.
\end{align}
The requirement that each $\theta[k]$ is invertible means that it must be even and consequently that $\theta_i$ must have parity $i$. The corresponding solutions of MRSKdV are $\alpha[0]=\alpha$, $u[0]=u$ and
\begin{align}
\label{alpha rec}
\alpha[k+1]&=-\alpha[k]+2(D(\theta[k])\theta[k]^{-1})_x,\\
\label{u rec}
u[k+1]&=u[k]+D(\alpha[k])-2D(\theta[k])\theta[k]^{-1}(\alpha[k]-(D(\theta[k])\theta[k]^{-1})_x).
\end{align}

Theorem~\ref{thm:DT} give a closed-form expression \eqref{theta[n]} for $\phi[n]$ as a quasideterminant. We will obtain the  corresponding expressions for $\alpha[n]$ and $u[n]$ in terms quasideterminants of the form
\begin{align}
\label{Qn orig}
Q_n(i,j)&=
\begin{vmatrix}
\theta_0&\cdots&\theta_{n-1}&0\\
D(\theta_0)&\cdots&D(\theta_{n-1})&0\\
\vdots&\ddots&\vdots&\vdots\\
D^{n-j-2}(\theta_0)&\cdots&D^{n-j-2}(\theta_{n-1})&0\\
D^{n-j-1}(\theta_0)&\cdots&D^{n-j-1}(\theta_{n-1})&1\\
D^{n-j}(\theta_0)&\cdots&D^{n-j}(\theta_{n-1})&0\\
\vdots&\ddots&\vdots&\vdots\\
D^{n-1}(\theta_0)&\cdots&D^{n-1}(\theta_{n-1})&0\\
D^{n+i}(\theta_0)&\cdots&D^{n+i}(\theta_{n-1})&\fbox{$0$}
\end{vmatrix}\\[8pt]
\label{Qn}
&=-\begin{vmatrix}
\theta_0&\cdots&\theta_{n-1}\\
\vdots&\ddots&\vdots\\
D^{n-j-2}(\theta_0)&\cdots&D^{n-j-2}(\theta_{n-1})\\
D^{n-j}(\theta_0)&\cdots&D^{n-j}(\theta_{n-1})\\
\vdots&\ddots&\vdots\\
D^{n-1}(\theta_0)&\cdots&D^{n-1}(\theta_{n-1})\\
D^{n+i}(\theta_0)&\cdots&D^{n+i}(\theta_{n-1})\\
\end{vmatrix}_{n,s}
\begin{vmatrix}
\theta_0&\cdots&\theta_{n-1}\\
\vdots&\ddots&\vdots\\
D^{n-1}(\theta_0)&\cdots&D^{n-1}(\theta_{n-1})
\end{vmatrix}_{n-j,s}^{-1},
\end{align}
for any $s=1,\ldots,n$ (see \eqref{rplucker}).

The following lemma records some useful properties of the $Q_n(i,j)$.
\begin{lemma}\label{lem:Qn}
\emph{(\emph{i})} $\hat{Q_n(i,j)}=(-1)^{i+j+1}Q_n(i,j)$. i.e., $Q_n(i,j)$ has the parity $(-1)^{i+j+1}$,\\[6pt]
\emph{(\emph{ii})} $D(\theta_0)\theta_0^{-1}=-Q_1(0,0)$ and
$D(\theta[k])\theta[k]^{-1}=-Q_k(0,0)-Q_{k+1}(0,0)$ for $k\ge 1$,\\[6pt]
\emph{(\emph{iii})} $Q_{n+1}(0,1)=Q_n(1,0)+Q_{n+1}(0,0)Q_{n}(0,0)$.
\end{lemma}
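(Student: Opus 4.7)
The plan is to prove the three parts in order, since (ii) uses (i) and (iii) uses (ii).

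For \emph{(i)}, I would work with the ratio representation \eqref{Qn}. Each entry $D^m(\theta_\ell)$ is homogeneous of parity $m+\ell\pmod 2$, so applying grade involution entrywise to a factor such as $|B_1|_{n,s}$ or $|B_2|_{n-j,s}^{-1}$ amounts to left-multiplying each row $r$ by $(-1)^{\mathrm{idx}(r)}$ and right-multiplying each column $c$ by $(-1)^{c-1}$, where $\mathrm{idx}(r)$ is the derivative order of row $r$. By the invariance property \eqref{invariance}, only the signs at the expansion row and column of each factor contribute. The surviving contribution is $(-1)^{n+i}(-1)^{n-j-1}=(-1)^{i+j+1}$, and the column-expansion signs from $|B_1|$ and $|B_2|^{-1}$ cancel since both factors are expanded at the same column $s$.

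For \emph{(ii)}, the case $k=0$ is immediate from the definition of $Q_1(0,0)$. For $k\ge 1$, I would specialise the intermediate identity in the proof of Theorem \ref{thm:DT} to $\phi=\theta_k$ (so $\phi[k]=\theta[k]$), obtaining $D(\theta[k])=F+\sigma(Q_k(0,0))\,\theta[k]$, where $F$ denotes the ``first'' quasideterminant displayed in that proof. Part (i) gives $\sigma(Q_k(0,0))=-Q_k(0,0)$ since $Q_k(0,0)$ is odd, so the claim reduces to $F\,\theta[k]^{-1}=-Q_{k+1}(0,0)$. This I would prove by applying the noncommutative Sylvester identity \eqref{nc syl} to the $(k+2)\times(k+2)$ bordered matrix whose full quasideterminant, expanded at the bottom-right, equals $Q_{k+1}(0,0)$. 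Partitioning so that $A$ is the top-left $k\times k$ Wronskian of $\theta_0,\ldots,\theta_{k-1}$, the sparse last column (a single $1$ in the row containing $D^k$, zeros elsewhere, $\fbox{$0$}$ in the bottom-right) forces two of the three $2\times 2$ block quasideterminants in NCS to collapse (to $0$ and $1$), while the remaining two are precisely $F$ and $\theta[k]$, yielding $Q_{k+1}(0,0)=-F\,\theta[k]^{-1}$.

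For \emph{(iii)}, I would apply the same NCS strategy to the $(n+2)\times(n+2)$ matrix obtained from the one above by moving the single $1$ one row upward, so that it now sits in the row containing $D^{n-1}$. By inspection its full quasideterminant equals $Q_{n+1}(0,1)$, and the NCS expansion, with the same partition, gives
\[
Q_{n+1}(0,1)=Q_n(1,0)-F\,\theta[n]^{-1}\,Q_n(0,0),
\]
after identifying the four relevant block quasideterminants with $Q_n(1,0)$, $F$, $\theta[n]$, and $Q_n(0,0)$. Substituting $F\,\theta[n]^{-1}=-Q_{n+1}(0,0)$ from (ii) yields (iii). The main obstacle throughout is the bookkeeping needed to recognise each block quasideterminant produced by NCS as one of $F$, $\theta[n]$, $Q_n(0,0)$, or $Q_n(1,0)$; once the correct auxiliary matrices are chosen, the sign-counting in (i) and the matchings in (ii) and (iii) are routine.
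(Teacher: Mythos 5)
Your proposal is correct and takes essentially the same route as the paper: part (i) by parity bookkeeping on the rows and columns combined with the invariance property \eqref{invariance}, part (ii) by specialising the displayed identity from the proof of Theorem~\ref{thm:DT} to the grade involution and identifying the first quasideterminant via the quasi-Pl\"ucker/noncommutative Sylvester relation $F\theta[k]^{-1}=-Q_{k+1}(0,0)$, and part (iii) by expanding $Q_{n+1}(0,1)$ in the form \eqref{Qn orig} with \eqref{nc syl}. The only blemish is a harmless miscount (the right-hand side of \eqref{nc syl} contains four block quasideterminants, of which two collapse, to $0$ and to $1$, leaving $F$ and $\theta[k]$); the substance of the argument is exactly the paper's.
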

\begin{proof}
(i) This follows from the facts that $\theta_i$ has parity $i$ and $D$ is odd, and from the invariance properties of quasideterminants \eqref{invariance}.\\[6pt]	
(ii) It is obvious from the definition \eqref{Qn} that $Q_{1}(0,0)=-D(\theta_0)\theta_0^{-1}$. The second result follows by using the same method used in the proof of Theorem~\ref{thm:DT} with $\sigma=\hat{\quad}$.\\[6pt]
(iii) Expanding $Q_{n+1}(0,1)$, written in the form \eqref{Qn orig}, using \eqref{nc syl} gives the terms on the right hand side.
\end{proof}

\begin{thm}\label{thm:alpha u}
After $n$ repeated Darboux transformations, the MRSKdV system has
new solutions $\alpha[n]$ and $u[n]$ expressed in terms of $Q_n(0,0)$ and $Q_n(0,1)$.
\begin{align}
\alpha[n]&=(-1)^n\alpha-2Q_n(0,0)_x,\label{alpha s}\\
u[n]&=u-2Q_n(0,1)_x-2Q_n(0,0)((-1)^n\alpha-Q_n(0,0)_x)+{1-(-1)^n\over 2}D(\alpha).\label{u s}
\end{align}
\end{thm}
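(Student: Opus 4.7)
The plan is to prove both formulas simultaneously by induction on $n$, using the Darboux recursions \eqref{alpha rec} and \eqref{u rec} to relate level $k+1$ to level $k$, and reducing $D(\theta[k])\theta[k]^{-1}$ to the $Q_n(0,0)$'s via Lemma~\ref{lem:Qn}(ii). The parity information in Lemma~\ref{lem:Qn}(i) and the algebraic identity Lemma~\ref{lem:Qn}(iii) handle the remaining quasideterminant manipulations.

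For the base case $n=1$ we have $D(\theta[0])\theta[0]^{-1}=-Q_1(0,0)$ directly from Lemma~\ref{lem:Qn}(ii); substituting into \eqref{alpha rec} yields \eqref{alpha s} at $n=1$, and \eqref{u s} at $n=1$ follows by the same substitution under the natural convention that $Q_1(0,1)$ vanishes (the layout in \eqref{Qn orig} degenerates when $n=j=1$). For the inductive step, the $\alpha$-formula is the easier half: substitute the inductive hypothesis for $\alpha[k]$ into \eqref{alpha rec}, replace $D(\theta[k])\theta[k]^{-1}$ by $-Q_k(0,0)-Q_{k+1}(0,0)$ using Lemma~\ref{lem:Qn}(ii), and observe that the $Q_k(0,0)_x$ terms cancel, producing $(-1)^{k+1}\alpha-2Q_{k+1}(0,0)_x$.

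The $u$-step is more delicate. Substituting the inductive formulas for $u[k]$ and $\alpha[k]$ into \eqref{u rec}, along with $D(\theta[k])\theta[k]^{-1}=-Q_k(0,0)-Q_{k+1}(0,0)$, and using that $D$ commutes with $\partial_x$ (so that $D(\alpha[k])=(-1)^kD(\alpha)-2D(Q_k(0,0))_x$), a direct expansion causes the $Q_k(0,0)\alpha$ and $Q_k(0,0)Q_k(0,0)_x$ contributions to cancel. What remains, after integrating once in $x$, is an identity expressing $Q_{k+1}(0,1)$ in terms of $Q_k(0,1)$, $D(Q_k(0,0))$, and products of the odd quantities $Q_k(0,0)$ and $Q_{k+1}(0,0)$. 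Applying Lemma~\ref{lem:Qn}(iii) to rewrite $Q_{k+1}(0,1)=Q_k(1,0)+Q_{k+1}(0,0)Q_k(0,0)$ reduces the goal to a purely quasideterminantal identity between $Q_k(1,0)-Q_k(0,1)$ and $D(Q_k(0,0))$, which I would establish using the noncommutative Sylvester identity \eqref{nc syl} together with the derivative rule in Lemma~\ref{lem:basic properties}(3).

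The main obstacle is the sign-and-parity bookkeeping in the $u$-step. The cross-terms coming from $D(\theta[k])\theta[k]^{-1}\cdot\alpha[k]$ and from $(D(\theta[k])\theta[k]^{-1})_x$ produce combinations such as $Q_k(0,0)Q_{k+1}(0,0)_x-Q_{k+1}(0,0)Q_k(0,0)_x$ in which both factors are odd; matching these against $D(Q_k(0,0))_x$ to reconstruct $Q_{k+1}(0,1)_x-Q_k(0,1)_x$ requires the parity statement Lemma~\ref{lem:Qn}(i), supercommutativity in $\sA$, and consistent tracking of the alternating signs $(-1)^k$. Once these are in order, the rest of the argument is a mechanical application of the inductive hypothesis.
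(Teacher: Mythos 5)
Your proposal is correct and follows essentially the same route as the paper: induction on $n$ using \eqref{alpha rec}, \eqref{u rec} together with Lemma~\ref{lem:Qn}(ii) for the $\alpha$-step and Lemma~\ref{lem:Qn}(iii) for the $u$-step. The paper dismisses the $u$-step as ``very similar''; you have correctly filled in what it actually requires, namely the additional derivative identity $D(Q_k(0,0))=Q_k(1,0)-Q_k(0,1)$ (the super analogue of the quasi-Wronskian derivative formula of \cite{CGJN}, provable exactly as you indicate via \eqref{nc syl} and Lemma~\ref{lem:basic properties}), together with the observation that $Q_1(0,1)=0$ for the base case.
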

\begin{proof}
(By induction.) First consider \eqref{alpha s}. For $n=1$ we must show that
\[
\alpha[1]=-\alpha+2(D(\theta_0)\theta_0^{-1})_x.
\]
This is clear from \eqref{alpha rec} for $k=0$ and Lemma~\ref{lem:Qn}(ii).

By using \eqref{alpha rec} in general case and Lemma~\ref{lem:Qn}(ii), assuming that \eqref{alpha s} holds for $n=k$, we have
\begin{align*}
\alpha[k+1]&=-\alpha[k]+2(D(\theta[k])\theta[k]^{-1})_x\\
&=(-1)^{k+1}\alpha-2Q_{k+1}(0,0)_x,
\end{align*}
as required.

The proof of \eqref{u s} is very similar and makes use of Lemma~\ref{lem:Qn}(iii). \end{proof}

\begin{rem} The quasideterminant solutions \eqref{alpha s} and \eqref{u s} could also be obtained by solving a noncommutative
linear system. In \cite{QPM}, the solutions $\alpha[n]$ and $u[n]$ for the MRSKdV system were given as
\begin{align}
\alpha[n]&=(-1)^n\alpha-2\partial a_{n,n-1},\\ u[n]&=u-2\partial a_{n,n-2}-2a_{n,n-1}((-1)^n\alpha-\partial
a_{n,n-1})+{1-(-1)^n\over 2}D\alpha,
\end{align}
where $a_{n,n-1},a_{n,n-2},\ldots,a_{n,0}$ satisfy the linear system
\begin{align}\label{LS} (D^n+a_{n,n-1}D^{n-1}+\cdots+a_{n,0})\theta_j=0,\,\ j=0,\ldots,n-1.
\end{align}
By using Theorem~\ref{2} and \eqref{rplucker}, we can solve the above linear system to obtain
\begin{align*} a_{n,n-i}=Q_n(0,i-1),\quad
i=1,2,\ldots,n.
\end{align*}
It is obvious that solutions $\alpha[n]$ and $u[n]$ obtained here coincide with those given by \eqref{alpha s} and \eqref{u s}. \end{rem}

\subsection{From quasideterminants to superdeterminants I}
It is natural for a supersymmetric system to express the solutions in terms of superdeterminants and this was done, in all but one case in \cite{QPM}. In this section, we will show systematically that the solutions given in Theorem~\ref{thm:alpha u}, expressed in terms of quasideterminants $Q_n(0,0)$ and $Q_n(0,1)$, can be reexpressed in terms of superdeterminants. The expressions we will obtain coincide with the superdeterminant solutions found in \cite{QPM} and we also find the superdeterminant expressions in the case that they did not.

Recall the property that $\theta_i$ has parity $i$. Let us therefore introduce the relabeling
\begin{equation}
\theta_{2k}=E_{k},\quad \theta_{2k+1}=O_{k}.	
\end{equation}
Also, we write $D^{2j}(\theta)=\theta^{(j)}$ and $D^{2j+1}(\theta)=D(\theta^{(j)})$, where $^{(j)}$ denotes the $j$th derivative with respect to $x$.

Consider the matrix
\begin{equation}
	W_n=\begin{bmatrix}
	\theta_0&\cdots&\theta_{n-1}\\
	\vdots&\ddots&\vdots\\
	D^{n-1}(\theta_0)&\cdots&D^{n-1}(\theta_{n-1})
	\end{bmatrix},
\end{equation}
appearing in the definition \eqref{Qn} of $Q_n(i,j)$. There is a natural reordering of the rows and columns
\begin{equation}
	W_n\to
	\mathcal W_n=\begin{bmatrix}
	X_n&Y_n\\
	Z_n&T_n
	\end{bmatrix},
\end{equation}
which gives an even matrix $\mathcal W_n$. This reordering does not change the value of any associated quasideterminant, as long as the expansion point in each refers to the same element. In the case that $n$ is even,
\begin{align*}
X_{2k}=	
\begin{bmatrix}
	E_{0}&\cdots&E_{k-1}\\
	\vdots&\ddots&\vdots\\
	E^{(k-1)}_{0}&\cdots&E^{(k-1)}_{k-1}
\end{bmatrix},\quad
Y_{2k}=	
\begin{bmatrix}
	O_{0}&\cdots&O_{k-1}\\
	\vdots&\ddots&\vdots\\
	O^{(k-1)}_{0}&\cdots&O^{(k-1)}_{k-1}
\end{bmatrix}
\end{align*}
and $Z_{2k}=D(X_{2k})$ and $T_{2k}=D(Y_{2k})$ are all $k\times k$ matrices. In the case that $n$ is odd, $X_{2k+1}$ is $(k+1)\times(k+1)$, $Y_{2k+1}$ is $(k+1)\times k$, $Z_{2k+1}$ is $k\times(k+1)$ and $T_{2k+1}$ is a $k\times k$ matrix whose precise form can be easily deduced from the above description.

Similarly, consider the matrix
\begin{equation}
	W'_n=\begin{bmatrix}
	\theta_0&\cdots&\theta_{n-1}\\
	\vdots&\ddots&\vdots\\
	D^{n-3}(\theta_0)&\cdots&D^{n-3}(\theta_{n-1})\\
	D^{n-1}(\theta_0)&\cdots&D^{n-1}(\theta_{n-1})\\
	D^{n}(\theta_0)&\cdots&D^{n}(\theta_{n-1})
	\end{bmatrix},
\end{equation}
appearing in the definition \eqref{Qn} of $Q_n(0,1)$. A similar reordering of this matrix
\begin{equation}
	W'_n\to
	\mathcal W'_n=\begin{bmatrix}
	X'_n&Y'_n\\
	Z'_n&T'_n
	\end{bmatrix},
\end{equation}
gives another even matrix $\mathcal W'_n$ where, for example,
\begin{align*}
X_{2k}'=	
\begin{bmatrix}
	E_{0}&\cdots&E_{k-1}\\
	\vdots&\ddots&\vdots\\
	E^{(k-2)}_{0}&\cdots&E^{(k-2)}_{k-1}\\
	E^{(k)}_{0}&\cdots&E^{(k)}_{k-1}
\end{bmatrix}.
\end{align*}

\begin{thm}\label{thm:summary}
For $n\in\mathbb{N}$,
\begin{equation}
Q_{n}(0,0)=D(\log(\B(\mathcal W_n))),	\quad
Q_{n}(0,1)=-\dfrac{\B(\mathcal W_n')}{\B(\mathcal W_n)},	
\end{equation}	
where $\B=\Ber$ if $n$ is even, and $\B=\Berstar$ if $n$ is odd.
\end{thm}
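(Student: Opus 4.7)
The plan is to attack both identities through the quasi-Plücker form \eqref{Qn}, which writes $Q_n(0,j)$ as a ratio of two quasideterminants of submatrices of $W_n$ sharing their top $n-1$ rows. Under the row/column permutation $W_n \to \mathcal W_n$ (and $W'_n \to \mathcal W'_n$), quasideterminant values are preserved as long as the expansion point tracks the same entry, so \eqref{Ber} converts each factor to a ratio of Berezinians. The choice between $\Ber$ and $\Berstar$ will emerge automatically from whether the expansion point lies in the $X$ or the $T$ block of the relevant supermatrix; the crucial observation is that the row $D^{n-1}\theta$ has parity depending on $n\bmod 2$, so its reordered position (and hence the block) is parity-dependent.

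I would handle $Q_n(0,1)$ directly. By \eqref{Qn}, $Q_n(0,1) = -|W'_n|_{n,s}\,|W_n|_{n-1,s}^{-1}$. I would choose $s$ so that the expansion column lands in the same block as the expansion row: for $n=2k$ this is an even ($E$) column, for $n=2k+1$ an odd ($O$) column. Then both $|W'_n|_{n,s}$ and $|W_n|_{n-1,s}$ have expansion points in a diagonal super-block (the $X$ block when $n$ is even, the $T$ block when $n$ is odd), so \eqref{Ber} applies to each with the same $\B$ (either $\Ber$ or $\Berstar$). The complementary $(n-1)\times(n-1)$ minors coincide: removing row $D^{n-1}\theta$ from $W'_n$ leaves rows $\theta,\ldots,D^{n-3}\theta,D^{n-1}\theta$, and removing row $D^{n-2}\theta$ from $W_n$ leaves the very same rows. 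Hence the two $\B$(complementary) factors cancel, and after verifying that the signs $(-1)^{i+j}$ from \eqref{Ber} also cancel between numerator and denominator, one obtains $Q_n(0,1) = -\B(\mathcal W'_n)/\B(\mathcal W_n)$.

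For $Q_n(0,0)$ the naive analogue fails because the matrix with last row replaced by $D^n\theta$ does not have square diagonal super-blocks, so its Berezinian is not directly defined. Instead I would argue by induction on $n$, base case $n=1$ being $Q_1(0,0) = -D(\theta_0)\theta_0^{-1} = D(\log\theta_0^{-1}) = D(\log\B(\mathcal W_1))$. The inductive step rests on a key auxiliary identity
\[
\theta[k] \;=\; \B(\mathcal W_k)^{-1}\B(\mathcal W_{k+1})^{-1},
\]
proved by applying \eqref{Ber} to $\theta[k] = |W_{k+1}|_{k+1,k+1}$, identified via reordering with $|\mathcal W_{k+1}|_{i_0,j_0}$ at a position in a diagonal block, and checking that the complementary minor $\mathcal W_{k+1}^{(i_0,j_0)}$ coincides with $\mathcal W_k$ (this needs a small case split on the parity of $k+1$, since the last row and column removed have parity-dependent interpretation, but in both cases the remaining rows are $D^0\theta,\ldots,D^{k-1}\theta$ and the remaining columns are $\theta_0,\ldots,\theta_{k-1}$). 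Taking the logarithmic derivative $D\log$, which satisfies $D\log(AB)=D\log A + D\log B$ for even $A,B$ in the supercommutative setting, this auxiliary identity gives
\[
D(\theta[k])\theta[k]^{-1} \;=\; -D\log\B(\mathcal W_k) - D\log\B(\mathcal W_{k+1}).
\]
Combining with Lemma~\ref{lem:Qn}(ii), namely $D(\theta[k])\theta[k]^{-1} = -Q_k(0,0) - Q_{k+1}(0,0)$, and the inductive hypothesis $Q_k(0,0) = D\log\B(\mathcal W_k)$, one reads off $Q_{k+1}(0,0) = D\log\B(\mathcal W_{k+1})$, completing the induction.

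The main technical obstacle is not conceptual but bookkeeping: tracking under the reordering $W_n \mapsto \mathcal W_n$ which entry of $W_n$ corresponds to which position in $\mathcal W_n$ (and hence which diagonal super-block it sits in), and confirming that the parity-dependent signs $(-1)^{i+j}$ produced by \eqref{Ber} always cancel cleanly. The two parity cases $n=2k$ and $n=2k+1$ each need separate verification, but produce the same final formula because the $\Ber/\Berstar$ dichotomy in the definition of $\B$ is precisely calibrated to absorb the parity-dependent block switch.
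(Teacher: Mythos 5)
Your proposal is correct and follows essentially the same route as the paper: $Q_n(0,1)$ is handled identically (the quasi-Pl\"ucker ratio from \eqref{Qn}, reordering to $\mathcal W_n$, $\mathcal W'_n$, and cancelling the common Berezinian factors via \eqref{Ber}), and for $Q_n(0,0)$ your recurrence $Q_{k+1}(0,0)+Q_k(0,0)=-D\log\theta[k]$ with $\theta[k]=|W_{k+1}|_{k+1,k+1}=\B(\mathcal W_k)^{-1}\B(\mathcal W_{k+1})^{-1}$ is exactly the paper's telescoping identity \eqref{Qn00}, merely sourced from Lemma~\ref{lem:Qn}(ii) and Theorem~\ref{thm:DT} rather than from the quasideterminant-differentiation formula of \cite{CGJN}. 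Your observation that the naive Berezinian reading of $Q_n(0,0)$ fails because the relevant supermatrix has non-square diagonal blocks is a correct and worthwhile justification for why the two cases must be treated by different mechanisms.
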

\begin{proof}
Using a straightforward generalization of the method of differentiating quasideterminants given in \cite{CGJN}, one can show that
\begin{equation}\label{Qn00}
	Q_{n}(0,0)=-D(|W_n|_{n,n})|W_n|_{n,n}^{-1}-Q_{n-1}(0,0)=-D(\log|W_n|_{n,n})-Q_{n-1}(0,0).
\end{equation}
Note that $|\hat W_n|_{n,n}=|W_n|_{n,n}$ and so it is even.

If $n=2k$, then by \eqref{Ber}
\[
|W_n|_{n,n}=|\mathcal W_{2k}|_{2k,2k}=\frac{\Berstar(\mathcal W_n)}{\Berstar(\mathcal W_{n-1})}=\frac{\Ber(\mathcal W_{n-1})}{\Ber(\mathcal W_{n})}.
\]	
Similarly, if $n=2k+1$, then
\[
|W_n|_{n,n}=|\mathcal W_{2k+1}|_{k+1,k+1}=\frac{\Ber(\mathcal W_{n})}{\Ber(\mathcal W_{n-1})},
\]
By repeated use of \eqref{Qn00}, noting that $Q_1(0,0)=-D(\log\theta_0)$, it follows that when $n$ is even
\[
Q_{n}(0,0)=D(\log(\Ber(\mathcal W_n)),
\]
and when $n$ is odd
\[
Q_{n}(0,0)=D(\log(\Berstar(\mathcal W_n)),
\]
as required.

Also, from \eqref{Qn}, we have
\begin{equation}
	Q_n(0,1)=-|W_n'|_{n,n-1}||W_n|^{-1}_{n-1,n-1}=-
	\begin{cases}
		\dfrac{|\mathcal W'_{2k}|_{k,k}}{|\mathcal W_{2k}|_{k,k}}&n=2k\\
		\dfrac{|\mathcal W'_{2k+1}|_{2k+1,2k+1}}{|\mathcal W_{2k+1}|_{2k+1,2k+1}}&n=2k+1
	\end{cases}.	
\end{equation}
The two quasideterminant in these expressions can be expressed, by \eqref{Ber}, as a ratio of Berezinians in which a common factor cancels. The result is that
\begin{equation}
	Q_n(0,1)=-
	\begin{cases}
		\dfrac{\Ber(\mathcal W'_{2k})}{\Ber(\mathcal W_{2k})}&n=2k\\[8pt]
		\dfrac{\Berstar(\mathcal W'_{2k+1})}{\Berstar(\mathcal W_{2k+1})}&n=2k+1
	\end{cases},	
\end{equation}
as required.
\end{proof}

\subsection{Binary Darboux transformations}
Binary Darboux transformations for the MRSKdV system were discussed in \cite{ShTu,LM}. In these articles, solutions expressed in terms of determinants were obtained. As discussed in connection with Darboux transformations it is to be expected that solutions for this supersymmetric system should be \emph{superdeterminants} in general. In this section, we will construct a more general type of binary Darboux transformation which will be shown to give these superdeterminants solutions and includes the solutions found in \cite{ShTu,LM} as a special case.

First we recall the definition of the adjoint for supersymmetric linear operators. For a linear operator $P$, $|P|$ denotes its parity. For example, $|D|=1$ and $|\partial|=0$, where $\partial$ denotes any derivative with respect to an even variable, and the parity of multiplication by a homogeneous element is the parity of that element (in the usual sense). The rules defining the superadjoint are
\begin{align}
	D^\dagger=-D,\quad\partial^\dagger=-\partial,\quad\mathcal M^\dagger=\mathcal M^{st},
\end{align}
where $\mathcal M$ denotes any matrix over $\sA$, together with the product rule
\begin{align}
	(PQ)^\dagger=(-1)^{|P||Q|}Q^\dagger P^\dagger,
\end{align}
where $P$ and $Q$ are operators (cf.~\eqref{st} for the case of matrices). In particular, this gives $(D^n)^\dagger=(-1)^{n(n+1)/2}D^n$ and, consistently, $(\partial^n)^\dagger=(-1)^n\partial^n$. For any $a\in\sA$, $a^\dagger=a$.

The Lax pair \eqref{L}, \eqref{M} has the adjoint form
\begin{align}
\label{Lad}
	L^\dagger&=\partial_x^2+D\alpha+u,\\	
\label{Mad}
	M^\dagger&=-\partial_x^3-\tfrac34\bigl(D(\alpha\partial_x+\partial_x\alpha)+u\partial_x+\partial_xu),	
\end{align}
and adjoint eigenfunctions satisfy
\begin{equation}
\label{ALP}
L^\dagger(\psi)=\xi\psi,\quad-\psi_t=M^\dagger(\psi),
\end{equation}
for eigenvalue $\xi$. Given an (eigenfunction, adjoint) eigenfunction pair $(\theta,\rho)$, the binary Darboux transformation \cite{ShTu,LM} is given by
\begin{align}
\label{bDT1}
	\phi&\to\phi-\theta\Omega(\theta,\rho)^{-1}\Omega(\phi,\rho),\\
\label{bDT2}
	\psi&\to\psi-\rho\Omega(\theta,\rho)^{-1}\Omega(\theta,\psi),\\
	\label{bDTa}
	\alpha&\to\alpha+2(\theta\Omega(\theta,\rho)^{-1}\hat{\rho})_x\\
	\label{bDTu}
	u&\to u-2(\alpha+(\theta\Omega(\theta,\rho)^{-1}\hat{\rho})_x)\theta\Omega(\theta,\rho)^{-1}\hat{\rho}+2(\theta\Omega(\theta,\rho)^{-1}D(\rho))_x,
\end{align}
where eigenfunction $\theta$ and adjoint eigenfunction $\rho$ have opposite parities. Since $D\bigl(\Omega(\phi,\psi\bigr)=\psi\phi$, $\Omega$ is even and assumed to be invertible. When iterating this transformation, both previous papers \cite{ShTu,LM} on this topic considered the case that all eigenfunctions are even and all adjoint eigenfunctions are odd. We will show that this is not the most general possibility however.

Consider an even $(m|n)$-row vector eigenfunction $\mathcal{E}=(\theta_{0},\dots\theta_{m+n-1})$ and an odd $(m|n)$-row vector adjoint eigenfunction $\mathcal{O}=(\rho_{0},\dots,\rho_{m+n-1})$, where $\theta_{i}$ for $i=0,\dots,m-1$ and $\rho_{m+j}$ for $j=0,\dots,n-1$ are even and $\rho_{i}$ for $i=0,\dots,m-1$ and $\theta_{m+j}$ for $j=0,\dots,n-1$ are odd. These row vectors satisfy
\begin{gather}
\label{E}
	L(\mathcal E)=\mathcal E\Lambda,\quad\mathcal E_t=M(\mathcal E),\\
\label{O}
	L^\dagger(\mathcal O)=\mathcal O\Xi,\quad-\mathcal O_t=M^\dagger(\mathcal O),
\end{gather}
where $\Lambda$ and $\Xi$ are constant $(m+n)\times(m+n)$ diagonal matrices containing the eigenvalues. Then $\Omega=\Omega(\mathcal{E},\mathcal{O})$ is an even $(m|n)\times(m|n)$-supermatrix defined up to a constant by
\begin{align}
D(\Omega)&=\mathcal{O}^\dagger\mathcal{E},\qquad
\Omega\Lambda-\Xi\Omega=D(\mathcal{O}^\dagger\mathcal{E}_x-\mathcal{O}_x^\dagger\mathcal{E})-\hat{\mathcal{O}}^\dagger\alpha\mathcal{E},\\
\Omega_t&=D(\mathcal{O}_{xx}^\dagger\mathcal{E}-\mathcal{O}_x^\dagger\mathcal{E}_x+\mathcal{O}^\dagger\mathcal{E}_{xx})+\tfrac32\hat{\mathcal{O}}_x^\dagger\alpha\mathcal{E}+\tfrac34\hat{\mathcal{O}}^\dagger\alpha_x\mathcal{E}+\tfrac32D\bigl(\mathcal{O}^\dagger\alpha D(\mathcal{E})\bigr)+\tfrac32D\bigl(\mathcal{O}^\dagger u\mathcal{E}\bigr).
\end{align}

The closed form expressions for the results of iterated binary Darboux transformations are stated in the following theorems.
\begin{thm}\label{thm:Phi Psi}
Iterating the binary Darboux transformations \eqref{bDT1}, \eqref{bDT2} for $m+n\ge1$, one obtains	
\begin{equation}
\phi[m+n]=\begin{vmatrix}
\Omega(\mathcal E,\mathcal O)&\Omega(\phi,\mathcal O)\\
\mathcal E&\fbox{$\phi$}
\end{vmatrix},\quad
\psi[m+n]=\begin{vmatrix}
\Omega(\mathcal E,\mathcal O)^\dagger&\Omega(\mathcal E,\psi)^\dagger\\
\mathcal O&\fbox{$\psi$}
\end{vmatrix}
=\begin{vmatrix}
\hat{\Omega(\mathcal E,\mathcal O)}&\mathcal O^\dagger\\
\hat{\Omega(\mathcal E,\psi)}&\fbox{$\psi$}
\end{vmatrix}\label{Phi Psi}
\end{equation}
with
\begin{equation}\label{DQ}
\Omega(\phi[m+n],\psi[m+n])=
\begin{vmatrix}
    \Omega(\mathcal E,\mathcal O)&\Omega(\phi,\mathcal O)\\
    \Omega(\mathcal E,\psi)&\fbox{$\Omega(\phi,\psi)$}
\end{vmatrix}.
\end{equation}
\end{thm}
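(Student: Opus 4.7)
The plan is to prove all three identities by induction on $N=m+n$. The base case $N=1$ is immediate: the $1\times 1$ quasideterminants collapse to the single-step formulas \eqref{bDT1}--\eqref{bDT2}, and \eqref{DQ} follows directly from the defining relation $D(\Omega(\phi,\psi))=\psi^\dagger\phi$ together with a routine scalar check.

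For the inductive step, suppose the formulas hold after $N$ iterations and then append one further binary Darboux transformation using the next pair $(\theta_N,\rho_N)$ with opposite parities (so that $\Omega(\theta_N[N],\rho_N[N])$ is even and invertible). By \eqref{bDT1},
\[
\phi[N+1]=\phi[N]-\theta_N[N]\,\Omega(\theta_N[N],\rho_N[N])^{-1}\,\Omega(\phi[N],\rho_N[N]).
\]
By the inductive hypothesis each of the four factors on the right is an $(N+1)\times(N+1)$ quasideterminant, and crucially all four share the same $N\times N$ block $\Omega(\mathcal{E},\mathcal{O})$, differing only in their last row (with entries drawn from $\mathcal{E}$ or $\theta_N$) and last column (with entries drawn from $\Omega(\cdot,\mathcal{O})$ or $\Omega(\cdot,\rho_N)$). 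This is exactly the configuration of the noncommutative Sylvester identity \eqref{nc syl}: applying it with $A=\Omega(\mathcal{E},\mathcal{O})$ folds the right hand side into a single $(N+2)\times(N+2)$ quasideterminant matching the claimed form for $\phi[N+1]$ with $\mathcal{E}$ extended by $\theta_N$ and $\mathcal{O}$ extended by $\rho_N$. The argument for the first form of $\psi[m+n]$ is symmetric, exchanging the roles of eigenfunctions and adjoint eigenfunctions; the second form follows from the first via the supertranspose rules \eqref{st}--\eqref{st st} together with $\Omega(\phi,\psi)^\dagger=\Omega(\psi,\phi)$, which is immediate from the defining $D$-relation for $\Omega$.

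The identity \eqref{DQ} is handled in the same spirit, by Sylvester applied to the transfer formula
\[
\Omega(\phi[N+1],\psi[N+1])=\Omega(\phi[N],\psi[N])-\Omega(\theta_N[N],\psi[N])\,\Omega(\theta_N[N],\rho_N[N])^{-1}\,\Omega(\phi[N],\rho_N[N]).
\]
The main obstacle is verifying this transfer formula itself, since it is the one place at which the iteration step actually uses the definition of $\Omega$ rather than purely formal quasideterminant manipulation. The cleanest route is to differentiate both sides with $D$ using Lemma~\ref{lem:basic properties}(iii) and to check equality by expanding with the relations $D(\Omega(\phi,\psi))=\psi^\dagger\phi$; the additive constant of integration is then pinned down by evaluation at a base point where $\Omega(\phi[N],\rho_N[N])=\Omega(\theta_N[N],\psi[N])=0$. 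Once the transfer formula is in hand, Sylvester collapses the right hand side to the claimed $(N+2)\times(N+2)$ quasideterminant and the induction closes simultaneously for all three statements.
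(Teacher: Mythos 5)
Your proof is correct and follows essentially the same route as the paper: induction on $m+n$, with the noncommutative Sylvester identity \eqref{nc syl} collapsing each one-step transformation into the enlarged quasideterminant, the supertranspose rules \eqref{st}--\eqref{st st} giving the second form of $\psi[m+n]$, and the defining relation $D(\Omega(\phi,\psi))=\psi\phi$ together with fixing the constant of integration establishing \eqref{DQ}. The only cosmetic difference is that the paper verifies \eqref{DQ} by applying $D$ directly to the full quasideterminant and matching it to $\psi[k]\phi[k]$, rather than first proving your one-step transfer formula and then invoking Sylvester, but both reduce to the same computation.
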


\begin{proof} (By induction.) Let $k=m+n$. The formulae \eqref{Phi Psi} and \eqref{DQ} evidently hold for $k=1$. Now suppose that they hold for given $k$ and consider another binary Darboux transformation defined in terms of
\begin{equation}
\theta[k]=\begin{vmatrix}
\Omega(\mathcal E,\mathcal O)&\Omega(\theta,\mathcal O)\\
\mathcal E&\fbox{$\theta$}
\end{vmatrix},\quad
\rho[k]=\begin{vmatrix}
\hat{\Omega(\mathcal E,\mathcal O)}&\mathcal O^\dagger\\
\hat{\Omega(\mathcal E,\rho)}&\fbox{$\rho$}
\end{vmatrix}\label{theta rho}.
\end{equation}
Here $(\theta,\rho)$ be another eigenfunction, adjoint eigenfunction pair of opposite parity, so that $(\mathcal E,\theta)$ and $(\mathcal O,\rho)$ are even and odd row vectors respectively. Then, using \eqref{nc syl},
\begin{align*}
	\phi[k+1]&=\phi[k]-\theta[k]\Omega(\theta[k],\rho[k])^{-1}\Omega(\phi[k],\rho[k])\\
	&=\begin{vmatrix}
	\Omega(\mathcal E,\mathcal O)&\Omega(\theta,\mathcal O)&\Omega(\phi,\mathcal O)\\
	\Omega(\mathcal E,\rho)&\Omega(\theta,\rho)&\Omega(\phi,\rho)\\
	\mathcal E&\theta&\fbox{$\rho$}
	\end{vmatrix},
\end{align*}
verifying the first equation in \eqref{Phi Psi} for $k+1$. The second equation is proved in a similar way. The alternative expression for $\psi[m+n]$ follows from \eqref{st st}.

Finally, it is straightforward to show that $D$ applied to the RHS of \eqref{DQ} gives
\begin{align}\label{RHS}
(\psi-\hat{\Omega(\mathcal{E},\psi)}\hat{\Omega(\mathcal{E},\mathcal{O})}{}^{-1}\mathcal{O}^\dagger)(\phi-\mathcal{E}\Omega(\mathcal{E},\mathcal{O})^{-1}\Omega(\phi,\mathcal{O}))=\psi[k]\phi[k],\end{align}
as required.
\end{proof}

\begin{thm}\label{thm:alpha U}
Let $(\alpha,u)$ be a solution of MRSKdV and let $\mathcal E$ and $\mathcal O$ respectively be even and odd $(m|n)$-row vectors satisfying \eqref{E} and \eqref{O}. Then for any integers $m+n\ge0$
\begin{equation}\label{QGA}
\alpha[m+n]=\alpha-2A[m+n]_x,\quad u[m+n]=u+2(\alpha-A[m+n]_x)A[m+n]-2U[m+n]_x,
\end{equation}
where
\begin{equation}\label{AU}
	A[m+n]=
	\begin{vmatrix}
	    \Omega(\mathcal E,\mathcal O)&\hat{\mathcal O}^\dagger\\
	    \mathcal E&\fbox{$0$}
	\end{vmatrix},\quad U[m+n]=
	\begin{vmatrix}
	    \Omega(\mathcal E,\mathcal O)&D(\mathcal O^\dagger)\\
	    \mathcal E&\fbox{$0$}
	\end{vmatrix},
\end{equation}
are also solutions of MRSKdV.
\end{thm}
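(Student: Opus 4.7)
The proof will go by induction on $k := m+n$. For the base case $k=1$, with $\mathcal{E}=(\theta)$ and $\mathcal{O}=(\rho)$ scalars of opposite parities, the $2\times 2$ quasideterminants in \eqref{AU} collapse to $A[1] = -\theta\,\Omega(\theta,\rho)^{-1}\hat\rho$ and $U[1] = -\theta\,\Omega(\theta,\rho)^{-1}D(\rho)$, and substituting these into \eqref{QGA} reproduces \eqref{bDTa} and \eqref{bDTu} directly.

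For the inductive step I would assume the formulas hold at level $k$ and compose one further binary Darboux transformation driven by an additional scalar pair $(\theta,\rho)$ of opposite parities whose $k$-step transforms $\theta[k]$ and $\rho[k]$ are given by \eqref{theta rho} of Theorem~\ref{thm:Phi Psi}. The single-step rules \eqref{bDTa}--\eqref{bDTu} then express $\alpha[k+1]$ and $u[k+1]$ in terms of $\alpha[k]$, $u[k]$, and correction terms involving $\theta[k]\,\Omega(\theta[k],\rho[k])^{-1}\,\hat{\rho[k]}$ and $\theta[k]\,\Omega(\theta[k],\rho[k])^{-1}\,D(\rho[k])$. The heart of the argument is the identity
\[
A[k+1] = A[k] - \theta[k]\,\Omega(\theta[k],\rho[k])^{-1}\,\hat{\rho[k]},
\]
obtained by applying the noncommutative Sylvester identity \eqref{nc syl} to the $(k+2)\times(k+2)$ quasideterminant defining $A[k+1]$, with the outer block taken to be $\Omega(\mathcal{E},\mathcal{O})$ and the middle row/column coming from the appended $\theta$ and $\rho$; a parallel Sylvester expansion of $U[k+1]$, combined with a computation of $D(\rho[k])$ via Lemma~\ref{lem:basic properties}(iii) applied with $\sigma=\hat{\ }$ and the defining relations $D(\Omega)=\mathcal{O}^\dagger\mathcal{E}$, $D(\Omega(\mathcal{E},\rho))=\rho^\dagger\mathcal{E}$ (plus the anticommutation $D\circ\hat{\ }=-\hat{\ }\circ D$), yields an analogous identity for $U[k+1]$.

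Given these two Sylvester identities, the statement $\alpha[k+1]=\alpha-2A[k+1]_x$ is immediate from the inductive hypothesis $\alpha[k]=\alpha-2A[k]_x$ and the single-step formula for $\alpha[k+1]$. The corresponding derivation for $u[k+1]$ is longer: substituting $\alpha[k]=\alpha-2A[k]_x$ and the two Sylvester identities into the single-step rule \eqref{bDTu}, and combining with the inductive expression $u[k]=u+2(\alpha-A[k]_x)A[k]-2U[k]_x$, collects terms into the target form $u+2(\alpha-A[k+1]_x)A[k+1]-2U[k+1]_x$ modulo residues of the form $A[k]_x A[k+1]-A[k+1]_x A[k]$. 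The main obstacle I expect is handling these residues, which do not vanish in $\sA$: supercommutativity forces $A[k]A[k+1]=-A[k+1]A[k]$ for the odd quantities $A[k]$ and $A[k+1]$, and differentiating this relation gives $(A[k]A[k+1])_x = A[k]_x A[k+1] - A[k+1]_x A[k]$, allowing the residues to be rewritten as an exact $x$-derivative and absorbed into $U[k+1]_x$, which closes the induction.
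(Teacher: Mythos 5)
Your proof follows essentially the same route as the paper's: induction on $k=m+n$ via the single-step rules \eqref{bDTa}--\eqref{bDTu}, with the Sylvester identity \eqref{nc syl} yielding $\theta[k]\Omega(\theta[k],\rho[k])^{-1}\hat{\rho[k]}=A[k]-A[k+1]$ and the parallel expansion $\theta[k]\Omega(\theta[k],\rho[k])^{-1}D(\rho[k])=U[k]-U[k+1]-A[k+1]A[k]$, whose cross term together with the oddness of $A[k]$ and $A[k+1]$ is exactly what cancels the residues you identify. The argument is correct.
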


\begin{proof}
Let $k=m+n$. The formulae \eqref{AU} are true by definition for $k=0$; $\alpha[0]=\alpha$ and $u[0]=u$. The formulae will be proved by finding certain expression involving $\alpha[k]$ and $u[k]$ that are independent of $k$.

As in the proof of Theorem~\ref{thm:Phi Psi}, suppose that a binary Darboux transformation is determined by
\begin{equation*}
\theta[k]=\begin{vmatrix}
\Omega(\mathcal E,\mathcal O)&\Omega(\theta,\mathcal O)\\
\mathcal E&\fbox{$\theta$}
\end{vmatrix},\quad
\rho[k]=\begin{vmatrix}
\hat{\Omega(\mathcal E,\mathcal O)}&\mathcal O^\dagger\\
\hat{\Omega(\mathcal E,\rho)}&\fbox{$\rho$}
\end{vmatrix},
\end{equation*}
where $(\theta,\rho)$ are an eigenfunction, adjoint eigenfunction pair of opposite parity, so that $(\mathcal E,\theta)$ and $(\mathcal O,\rho)$ are even and odd row vectors respectively. Then, by \eqref{bDTa} and \eqref{bDTu},
\begin{align}\label{alpha[k+1]}
\alpha[k+1]&=\alpha[k]+2(\theta[k]\Omega(\theta[k],\rho[k])^{-1}\hat{\rho[k]})_x,\\
\label{u[k+1]}
u[k+1]&=u[k]-2(\alpha[k]+(\theta[k]\Omega(\theta[k],\rho[k])^{-1}\hat{\rho[k]})_x)\theta[k]\Omega(\theta[k],\rho[k])^{-1}\hat{\rho[k]}\nonumber\\&\qquad+2(\theta[k]\Omega(\theta[k],\rho[k])^{-1}D(\rho[k]))_x.
\end{align}
Using the notation introduced in \eqref{AU}, we have
\begin{equation}
	A[k+1]=
	\begin{vmatrix}
	    \Omega(\mathcal E,\mathcal O)&\Omega(\theta,\mathcal O)&\hat{\mathcal O}^\dagger\\
	    \Omega(\mathcal E,\rho)&\Omega(\theta,\rho)&\hat{\rho}\\
	    \mathcal E&\theta&\fbox{$0$}
	\end{vmatrix},\quad U[k+1]=
	\begin{vmatrix}
	    \Omega(\mathcal E,\mathcal O)&\Omega(\theta,\mathcal O)&D(\mathcal O^\dagger)\\
	    \Omega(\mathcal E,\rho)&\Omega(\theta,\rho)&D(\rho)\\
	    \mathcal E&\theta&\fbox{$0$}
	\end{vmatrix}.
\end{equation}
Then using \eqref{nc syl}, and noting that $A[k]$ and $A[k+1]$ are odd, it is straightforward to show that
\begin{equation}\label{A[k]}
\theta[k]\Omega(\theta[k],\rho[k])^{-1}\hat{\rho[k]}=A[k]-A[k+1],	
\end{equation}
and
\begin{equation}\label{U[k]}
\theta[k]\Omega(\theta[k],\rho[k])^{-1}D(\rho[k])=U[k]-U[k+1]-A[k+1]A[k].	
\end{equation}
Substituting these in \eqref{alpha[k+1]} and \eqref{u[k+1]} one obtains $k$ independent expressions
\begin{equation}
	\alpha[k+1]+2A[k+1]_x=\alpha[k]+2A[k]_x=\alpha,
\end{equation}
and
\begin{equation}
	u[k+1]-2(\alpha-A[k+1]_x)A[k+1]+2U[k+1]_x=u[k]-2(\alpha -A[k]_x)A[k]+2U[k]_x=u,
\end{equation}
and hence \eqref{QGA} hold for all $k$.
\end{proof}

\subsection{From quasideterminants to superdeterminants II}
In this section we will show how the quasideterminant solutions $(A[m+n], U[m+n])$ obtained using binary Darboux transformations can be expressed in terms of superdeterminants. To do this, it is necessary to introduce a more detailed notation for row vector eigenfunctions and adjoint eigenfunctions. Recall that for the general transformation we use $(m|n)$-row vectors $\mathcal E$ and $\mathcal O$ which are even and odd with entries $\theta_i$ and $\rho_i$ respectively. Here we will also write $\mathcal E^i=(\theta_0,\dots,\theta_{i-1})$ and $\mathcal O^i=(\rho_0,\dots,\rho_{i-1})$ for the row vectors containing the first $i$ entries of $\mathcal E$ and $\mathcal O$ respectively, and denote by subscript $0$ and $1$ the even and odd element parts of $\mathcal E$ and $\mathcal O$ respectively. Thus $\mathcal E=(\mathcal E_0,\mathcal E_1)$ and $\mathcal O=(\mathcal O_1,\mathcal O_0)$.
\begin{thm}
The expressions $(A[m+n], U[m+n])$ can expressed as
\begin{equation}\label{A U}
	A[m+n]=D\left(\log \Ber(\mathcal G_{(m|n)})\right),\quad
	U[m+n]=\frac{\Ber(\mathcal G'_{(m+1|n)})}{\Ber(\mathcal G_{(m|n)})},
\end{equation}
where
\[
\mathcal G_{(m|n)}=\begin{pmat}({|})
\Omega(\mathcal E_0,\mathcal O_1)&\Omega(\mathcal E_1,\mathcal O_1)\cr\-
\Omega(\mathcal E_0,\mathcal O_0)&\Omega(\mathcal E_1,\mathcal O_0)\cr
\end{pmat},
\]
in an even $(m|n)\times(m|n)$-supermatrix and
\[	
\mathcal G'_{(m+1|n)}\begin{pmat}({.|})
\Omega(\mathcal E_0,\mathcal O_1)&D(\mathcal O_1^\dagger)&\Omega(\mathcal E_1,\mathcal O_1)\cr
\mathcal E_0&0&\mathcal E_1\cr\-
\Omega(\mathcal E_0,\mathcal O_0)&D(\mathcal O_0^\dagger)&\Omega(\mathcal E_1,\mathcal O_0)\cr
\end{pmat},
\]
in an even $(m+1|n)\times(m+1|n)$-supermatrix.
\end{thm}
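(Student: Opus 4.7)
The plan is to treat the two identities separately: the $U[m+n]$ formula follows from a direct supermatrix rearrangement and \eqref{Ber}, while the $A[m+n]$ formula requires differentiating a Berezinian explicitly. In both cases the starting point is the observation that expanding the quasideterminants in \eqref{AU} gives
\[
A[m+n]=-\mathcal{E}\,\Omega(\mathcal{E},\mathcal{O})^{-1}\hat{\mathcal{O}}^\dagger,\qquad U[m+n]=-\mathcal{E}\,\Omega(\mathcal{E},\mathcal{O})^{-1}D(\mathcal{O}^\dagger).
\]

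For $U[m+n]$, I would recognise the right-hand side as the quasideterminant of the bordered matrix $\bigl(\begin{smallmatrix}\Omega&D(\mathcal{O}^\dagger)\\ \mathcal{E}&0\end{smallmatrix}\bigr)$ expanded at its bottom-right zero, and then simultaneously permute rows and columns so that the bordering row moves to position $m+1$ and the bordering column likewise. The inserted row $(\mathcal E_0,0,\mathcal E_1)$ has exactly the parity pattern of a row in the upper (even) block of an $(m+1|n)$-supermatrix, and the inserted column $(D(\mathcal O_1^\dagger),0,D(\mathcal O_0^\dagger))^t$ has the pattern of a column in the left (even) block, because $D$ is odd and so $D(\rho_i)$ has the opposite parity to $\rho_i$. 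A simultaneous row/column permutation preserves a quasideterminant once the expansion entry is tracked, so this yields $U[m+n]=|\mathcal G'_{(m+1|n)}|_{m+1,m+1}$. Formula \eqref{Ber} applied at this even-block position, together with the observation that deleting the inserted row and column recovers $\mathcal G_{(m|n)}$, then gives the claimed ratio of Berezinians.

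For $A[m+n]$ the grade involution spoils the parity pattern of $\hat{\mathcal{O}}^\dagger$, so no analogous bordered supermatrix exists; instead I would establish the equivalent identity $D(\log\Ber(\mathcal G_{(m|n)}))=-\mathcal E\,\Omega^{-1}\hat{\mathcal O}^\dagger$ by a direct block calculation. Writing $\Omega=\bigl(\begin{smallmatrix}X&Y\\ Z&T\end{smallmatrix}\bigr)$ with $X=\Omega(\mathcal E_0,\mathcal O_1)$, $Y=\Omega(\mathcal E_1,\mathcal O_1)$, $Z=\Omega(\mathcal E_0,\mathcal O_0)$, $T=\Omega(\mathcal E_1,\mathcal O_0)$, and differentiating $\Ber(\Omega)=\det(X-YT^{-1}Z)/\det(T)$ by means of $D(ab)=D(a)b+\hat a D(b)$, $\hat Y=-Y$ and $D(T^{-1})=-T^{-1}D(T)T^{-1}$ yields
\[
D(\log\Ber(\Omega))=\mathrm{tr}\bigl(\Omega_e^{-1}D(\Omega_e)\bigr)-\mathrm{tr}\bigl(T^{-1}D(T)\bigr),\qquad\Omega_e=X-YT^{-1}Z.
\]
Substituting $D(\Omega)=\mathcal O^\dagger\mathcal E$ block by block exhibits $D(\Omega_e)$ as the rank-one product $(\mathcal O_1^\dagger+YT^{-1}\mathcal O_0^\dagger)(\mathcal E_0-\mathcal E_1T^{-1}Z)$. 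Cyclically rearranging each trace gives $(\mathcal E_0-\mathcal E_1T^{-1}Z)\Omega_e^{-1}(\mathcal O_1^\dagger+YT^{-1}\mathcal O_0^\dagger)-\mathcal E_1T^{-1}\mathcal O_0^\dagger$, and a parallel expansion of $-\mathcal E\,\Omega^{-1}\hat{\mathcal O}^\dagger$ using the Schur-complement block inverse of $\Omega$ and $\hat{\mathcal O}^\dagger=(-\mathcal O_1^\dagger,\mathcal O_0^\dagger)^t$ produces the same expression.

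The main obstacle will be the sign and parity bookkeeping in the $A[m+n]$ case: the grade involution in $\hat{\mathcal O}$, the parity-shifting nature of $D$ (which makes $D(X),D(T)$ odd blocks and $D(Y),D(Z)$ even), and the twisted quotient rule $D(A^{-1})=-\hat A^{-1}D(A)A^{-1}$ from Lemma~\ref{lem:basic properties} must all be tracked simultaneously when differentiating $\Ber(\Omega)$. Once the rank-one factorisation of $D(\Omega_e)$ is spotted, the remainder is a routine block computation.
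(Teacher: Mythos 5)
Your proposal is correct, and for the $U[m+n]$ identity it is essentially the paper's argument (the paper simply states that it ``follows directly from \eqref{AU} and \eqref{Ber}''; your row/column permutation and parity check of the bordering row and column is the content of that remark). For $A[m+n]$, however, you take a genuinely different route. The paper never differentiates a Berezinian: it works recursively, using the already-established recurrence \eqref{A[k]} in the one-step form $A[i+1]=A[i]-\hat{\rho[i]}\theta[i]\,\Omega[i]^{-1}$ with $\Omega[i]=\Omega(\theta[i],\rho[i])$, noting that $D(\Omega[i])=\rho[i]\theta[i]$ so each increment is $\pm D(\log\Omega[i])$ with the sign governed by the parity of $\rho_i$, identifying each $\Omega[i]$ via \eqref{Ber} as a ratio of Berezinians (resp.\ $\Berstar$'s) of nested principal submatrices, and telescoping the product $\Omega[0]\cdots\Omega[m-1]/\Omega[m]\cdots\Omega[m+n-1]=\Ber(\Omega(\mathcal E,\mathcal O))$. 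That route reuses existing machinery, avoids differentiating determinants under a superderivative, and makes transparent why the sign flips exactly at $i=m$, i.e.\ why a Berezinian rather than a determinant appears. Your route --- direct differentiation of $\Ber(\Omega)=\det(X-YT^{-1}Z)/\det(T)$, the rank-one factorisation $D(X-YT^{-1}Z)=(\mathcal O_1^\dagger+YT^{-1}\mathcal O_0^\dagger)(\mathcal E_0-\mathcal E_1T^{-1}Z)$ obtained from $D(\Omega)=\mathcal O^\dagger\mathcal E$ and the twisted Leibniz rule, and matching against the Schur-complement expansion of $-\mathcal E\,\Omega^{-1}\hat{\mathcal O}^\dagger$ --- is self-contained and non-recursive, and I have checked that the two sides do agree term by term, including the extra $-\mathcal E_1T^{-1}\mathcal O_0^\dagger$ contribution. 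The one step you should make explicit is the identity $D(\log\det M)=\mathrm{tr}\bigl(M^{-1}D(M)\bigr)$ for a superderivative: it holds here only because $X-YT^{-1}Z$ and $T$ have purely even (hence commuting) entries, so each Laplace-expansion term is differentiated in exactly one factor and the single odd factor commutes with the remaining even ones; the same evenness is what justifies the cyclic rearrangement of the traces. With that remark added, your argument is complete.
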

\begin{proof}
The formula for $U[m+n]$ in \eqref{A U} follows directly from \eqref{AU} and \eqref{Ber}.	
	
For any $i=0,\dots,m+n$,
\begin{equation}
	A[i]=
	\begin{vmatrix}
	\Omega(\mathcal E^i,\mathcal O^i)&\hat{\mathcal O}^i{}^\dagger\\
	\mathcal E^i&\fbox{$0$}
    \end{vmatrix},
\end{equation}
and
\begin{equation}
	\theta[i]=
	\begin{vmatrix}
	\Omega(\mathcal E^i,\mathcal O^i)&\Omega(\theta_i,\mathcal O^i)\\
	\mathcal E^i&\fbox{$\theta_i$}
	\end{vmatrix},\quad
	\rho[i]=\begin{vmatrix}
	\hat{\Omega(\mathcal E^i,\mathcal O^i)}&\mathcal O^i{}^\dagger\\
	\hat{\Omega(\mathcal E^i,\rho_i)}&\fbox{$\rho_i$}
\end{vmatrix},
\end{equation}
where $A[0]=U[0]=0$ and $\theta[0]=\theta_0$, $\rho[0]=\rho_0$.
Also define
\begin{equation}
\Omega[i]=\Omega(\theta[i],\rho[i])=
\begin{vmatrix}
    \Omega(\mathcal E^{i},\mathcal O^{i})&\Omega(\theta_i,\mathcal O^i)\\
    \Omega(\mathcal E^i,\rho_i)&\fbox{$\Omega(\theta_i,\rho_i)$}
\end{vmatrix}.
\end{equation}
Using \eqref{Ber},
\begin{equation}
	\Omega[i]=
	\begin{cases}
		\dfrac{\Ber(\Omega(\mathcal E^{i+1},\mathcal O^{i+1}))}{\Ber(\Omega(\mathcal E^{i},\mathcal O^{i}))}&0\le i\le m-1 \\[6pt]
		\dfrac{\Berstar(\Omega(\mathcal E^{i+1},\mathcal O^{i+1}))}{\Berstar(\Omega(\mathcal E^{i},\mathcal O^{i}))}&m\le i\le m+n-1.
	\end{cases},
\end{equation}
where $\Ber(\Omega(\mathcal E^{0},\mathcal O^{0}))=1$. If fact, for $1\le i\le m$ then $\Ber(\Omega(\mathcal E^{i},\mathcal O^{i}))=\det(\Omega(\mathcal E^{i},\mathcal O^{i}))$,

Since all pairs $(\theta_j,\rho_j)$ have opposite parity, $\theta[i]$ and $\rho[i]$ also have opposite parity and so commute with each other. Hence \eqref{A[k]} may be written for arbitrary $i=0,\dots,m+n-1$ as
\begin{equation}
	A[i+1]=A[i]-\frac{\hat{\rho[i]}\theta[i]}{\Omega[i]}.
\end{equation} 	
Now, the parity of $\rho[i]$ is the same as that of $\rho_i$ and hence
\begin{equation}
\hat{\rho[i]}=
\begin{cases}
	-\rho[i]&0\le i\le m-1\\	
	\rho[i]&m\le i\le m+n-1
\end{cases}.
\end{equation}
Finally, this gives the recurrence relation
\begin{equation}
	A[i+1]=A[i]+\begin{cases}
		D(\log\Omega[i])&0\le i\le m-1\\	
		-D(\log\Omega[i])&m\le i\le m+n-1
	\end{cases},\quad A[0]=0,
\end{equation} 	
and so
\begin{equation}
	A[m+n]=D\left(\log\frac{\Omega[0]\Omega[1]\cdots\Omega[m-1]}{\Omega[m]\Omega[m+1]\cdots\Omega[m+n-1]}\right)=D\bigl(\log\Ber(\Omega(\mathcal E,\mathcal O))\bigr),
\end{equation}
as required.
\end{proof}

\begin{rem}
The earlier papers on this topic \cite{ShTu,LM} take $n=0$ only. In this case, $\mathcal E=\mathcal E_0$ and $\mathcal O=\mathcal O_1$ and we obtain solutions expressed in terms of determinants not superdeterminants
\[
A[m]=D(\log\det\Omega(\mathcal E_0,\mathcal O_1)),
\]
and
\[
U[m]=\frac{\det\begin{pmatrix}
    \Omega(\mathcal E_0,\mathcal O_1)&D(\mathcal O_1^t)\\
    \mathcal E_0&0
\end{pmatrix}}{\det(\Omega(\mathcal E_0,\mathcal O_1))}.
\]
\end{rem}

\section{Conclusions} In this paper, we have considered a twisted derivation which includes normal derivative, forward difference operator, q-difference operator and superderivatives as special cases. We showed that a Darboux transformation defined in terms of such a twisted derivation has an iteration formula written in terms of a quasideterminant. This results opens the opportunity for an unified approach to Darboux transformations for differential, superdifferential difference and $q$-difference operators. In this paper we showed how this was achieved for one example, the Manin-Radul super KdV equation.

In the Darboux transformation approach to this equation given in \cite{QPM}, the authors were forced to consider two different cases, for an odd and an even number of iterations, in order to obtain solutions expressed in terms of superdeterminants. Using the same Darboux transformation, we have obtained quasideterminant solutions in a unified manner, irrespective of the parity of the number of iterations, and then we were able in all cases to reexpress the solutions in terms of superdeterminants using the known relationship between quasideterminants and superdeterminants. This illustrates the advantage of using quasideterminants over using superdeterminants from the start and comes about because quasideterminants need no assumption about the nature of the noncommutativity whereas for superdeterminants restrictive assumptions about parity of matrix elements are required. The same advantage holds in connection with iterated binary Darboux transformations. In this case we were able to construct a much more general types of solution as well. In earlier work \cite{ShTu,LM} only solutions expressed in terms of determinants were found. However, it is to be expected that for a supersymmetric integrable system the most general solutions have expressions in terms of superdeterminants rather than determinants. This deficiency was remedied in this paper and we have obtained a much wider class of solutions expressed in terms of both quasideterminants and superdeterminants.

\section*{Acknowledgement} This work was carried out during the authors' visit to the Newton Institute for Mathematical Sciences as part of the Programme on Discrete Integrable Systems (January-July 2009). The authors are grateful to the organizers for their invitation and to the Isaac Newton Institute for financial support and hospitality. This work was supported by the National Natural Science Foundation of China (grant No. 10601028) and the Project-sponsored by SRF for ROCS, SEM.


\begin{thebibliography}{99}

\bibitem{K} Kupershmidt, B. A. KP or mKP, vol. 78 of Mathematical Surveys
and Monographs. American Mathematical Society, Providence, RI
(2000). Noncommutative mathematics of Lagrangian, Hamiltonian, and
integrable systems.
\bibitem{P} Paniak, L. D. Exact Noncommutative KP and KdV Multi-solitons (2001). hep-th/0105185.
\bibitem{S} Sakakibara, M. \emph{J. Phys. A} \textbf{37}(2004) L599--L604.
\bibitem{WW1} Wang, N. and Wadati, M. \emph{J. Phys. Soc. Japan} \textbf{72}(2003) 1366--1373.
\bibitem{WW2} Wang, N. and Wadati, M. \emph{J. Phys. Soc. Japan} \textbf{72}(2003) 1881--1888.
\bibitem{WW3} Wang, N. and Wadati, M. J. Phys. Soc. Japan \textbf{73}(2004) 1689--1698.
\bibitem{H} Hamanaka, M. Noncommutative solitons and D-branes. Ph.D.
thesis (2003). hep-th/0303256.
\bibitem{HT1} Hamanaka, M. and Toda, K. \emph{Phys. Lett. A} \textbf{316}(2003) 77--83.
\bibitem{DH} Dimakis, A. and M\"uller-Hoissen, F. \emph{J. Phys. A} \textbf{38}(2005) 5453--5505.
\bibitem{JN} Nimmo, J. J. C. \emph{J. Phys. A} \textbf{39}(2006) 5053--5065.
\bibitem{LN1}Li, C. X., Nimmo, J. J. C. and Tamizhmani, K. M. \emph{Proc. R. Soc. A} \textbf{465}(2009) 1441--1451.
\bibitem{CGJN} Gilson C.R. and Nimmo J.J.C. \emph{J.
Phys. A: Math. Theor.} \textbf{40}(2007) 3839--3850.
\bibitem{GN2} Gilson, C. R., Nimmo, J. J. C. and Ohta, Y. \emph{J. Phys. A: Math. Theor.} \textbf{40}(2007) 12607--12617.
\bibitem{LN2}Li, C. X. and Nimmo, J. J. C., \emph{Proc. R. Soc. A}
\textbf{464}(2008) 951--966.
\bibitem{GR}Gelfand I.M. and Retakh V.S., Determinants of matrices
over noncommutative rings. \emph{Funct. Anal. Appl.}
\textbf{25}(1991) 13--25.
\bibitem{ManinRadul} Manin Y.I. and Radul A.O., \emph{Comm. Math. Phys.} \textbf{98}(1985) 65--77.
\bibitem{MP} Mathieu P., \emph{J. Math. Phys.} \textbf{29}(1988) 2499--2506.
\bibitem{KBA}Kupershmidt B. A. \emph{Phys. Lett. A} \textbf{102A}(1984) 213--215.
\bibitem{CMKP}Chaichian, M. and Kulish, P. P., \emph{Phys. Lett. B} \textbf{77}(1978)
413--416.
\bibitem{HS}Siddiq M., Hassan M. and Saleem U., \emph{J. Phys. A: Math.
Gen.} \textbf{39}(2006) 7313--7318.
\bibitem{RMKM}Roelofs G.H.M. and Kersten P.H.M., \emph{J. Math. Phys.} \textbf{33}(1992)
2185--2206.
\bibitem{MCPL}Morosi C. and Pizzocchero L., \emph{Commun. Math. Phys.}
\textbf{176}(1996) 353--381.
\bibitem{LQP} Liu Q.P., \emph{J. Phys. A: Math. Gen.} \textbf{28}(1995) L245--L248.
%\bibitem{LQP}Q.P. Liu, \emph{Lett. Math. Phys.} \textbf{35}(1995) 115.
\bibitem{OWPZ}Oevel W., Popowicz Z., \emph{Commun. Math. Phys.} \textbf{139}(1991) 441--460;
Figueroa-O'Farril J., Mas J., Ramos E., \emph{Rev. Math. Phys.} \textbf{3}(1991) 479--501.
\bibitem{CARA} Carstea A.S., Ramani A., Grammaticos B., \emph{Nonlinearity} \textbf{14}(2001)
1419--1423.
\bibitem{MNYM} McArthur I.N., Yung C.M., \emph{Mod. Phys. Lett. A }\textbf{8}(1993) 1739--1745.
\bibitem{QPM} Liu Q.P. and Manas M., \emph{Phys. Lett. B} \textbf{396}(1997)
133--140.
\bibitem{LPMM} Liu Q.P., Manas M., \emph{Phys. Lett. B} \textbf{394}(1997) 337--342.
\bibitem{HLS}Hartwig J. T., Larsson D. and Silvestrov S. D., \emph{J. Algebra} \textbf{295}(2006) 314--361.
\bibitem{CDC}De Concini S. and Procesi C., \emph{Quantum Groups}, Lecture Notes in Mathematics \textbf{1565}(1993) 31--140.
\bibitem{BZFA}Berezin F.A., Introduction to superanalysis (D. Reidel Publishing Company, Dordrecht, 1987).
\bibitem{DWB} DeWitt B., Supermanifolds (Cambridge University Press,
1984).
\bibitem{GGRW}Gelfand I., Gelfand S., Retakh V. and Wilson R.L.,
\emph{Adv. in Math.} \textbf{193}(2005) 56--141.
\bibitem{EGR}Etingof P., Gelfand I. and Retakh V., \emph{Math. Res. Lett.}
\textbf{4}(1997) 413--425.
\bibitem{BMRJ}Bergvelt M.J. and Rabin J.M., Super curves, their
Jacobians and super KP equations. arXiv: alg-geom/9601012v1.
\bibitem{Crum}Crum M.M., \emph{Q, J. Math.} \textbf{6}(1955) 121--127.
\bibitem{ShTu}Shaw J.C. and Tu M.H., \emph{J. Math. Phys.} \textbf{39}(1998) 4773--4784.
\bibitem{LM}Liu Q.P. and Ma\~nas M., \emph{Phys. Lett. A} \textbf{394}(1997) 337--342.
%\bibitem{LQPM}Liu Q.P. and Ma\~nas M., \emph{Phys. Lett. A}, \textbf{485}(2000) 293-300.



\end{thebibliography}
\end{document}